\documentclass[aps,floats,showpacs,amstex,amssymb,tightenlines,notitlepage,hidelinks,floatfix]{revtex4-1}

\usepackage{mathtools}
\usepackage{amsmath}
\usepackage{amsthm}
\usepackage{amsfonts}
\usepackage{amscd}
\usepackage{epsfig}
\usepackage{amssymb}
\usepackage{tabularx}
\usepackage{longtable}
\usepackage{calligra}
\usepackage{enumerate}
\usepackage{hyperref}
\usepackage{mathrsfs}
\usepackage{calligra}
\def\j2{\mathbf{J}^2}
\def\a{\alpha}
\def\b{\beta}
\def\g{\gamma}

\def\e{\epsilon}
\def\ve{\varepsilon}
\def\f{\mathcal{F}}

\def\d{\delta}

\def\L{\Lambda}

\def\w{\omega}

\def\p{\partial}

\def\s{\mathcal{S}}
\def\T{\mathcal{T}}

\def\td{\widetilde{D}}
\def\hd{\widehat{D}}

\def\A{{\cal A}}

\def\fn{\left. \Phi_n \right|_t }

\def\rn{Reissner-Nordstr\"om }

\newtheorem{thm}{Theorem}%[section]

%[section]%[thm]{Lemma}

\begin{document}
\title{Black hole nonmodal linear stability: even perturbations of  Reissner-Nordstr\"om}

\author{Gustavo Dotti and Juli\'an M.  Fern\'andez T\'{\i}o }
\affiliation {Facultad de Matem\'atica, Astronom\'{\i}a y
F\'{\i}sica (FaMAF), Universidad Nacional de C\'ordoba and\\
Instituto de F\'{\i}sica Enrique Gaviola, CONICET.\\ Ciudad
Universitaria, (5000) C\'ordoba,\ Argentina}

\begin{abstract}
This paper is a companion of [Phys.\ Rev.\ D {\bf 95}, 124041 (2017)] in which, 
following a program on black hole nonmodal linear stability initiated in Phys.\ Rev.\ Lett.\  {\bf 112} (2014) 191101, 
odd perturbations of the Einstein-Maxwell equations around a Reissner-Nordstr\"om (A)dS 
black hole were analyzed. Here we complete the proof of the nonmodal linear stability of this spacetime 
by analyzing the even sector of the linear perturbations.  We show that all the gauge invariant information in the metric and 
Maxwell field 
even perturbations 
is encoded in 
 two    spacetime scalars: ${\mathcal S}$, which 
 is a gauge invariant combination of  $\delta (C_{\alpha \beta \gamma \epsilon}C^{\alpha \beta \gamma \epsilon})$ 
and  $\delta (C_{\alpha \beta \gamma \delta} F_{\alpha \beta} F^{\gamma \delta})$, 
and ${\mathcal T}$, a gauge invariant combination of
 $\delta ( \nabla _\mu F_{\alpha \beta} \nabla^\mu F^{\alpha \beta })$ and 
 $\delta ( \nabla_{\mu} C_{\alpha \beta \gamma \delta} \nabla^{\mu}C^{\alpha \beta \gamma \delta})$. Here 
 $C_{\alpha \beta \gamma \delta}$
 is the Weyl tensor, $F_{\alpha \beta}$ the Maxwell field  and $\delta$ means 
first order variation. 
 We prove that $\mathcal{S}$ and $\mathcal{T}$ are are in one-one correspondence with gauge classes of even linear 
 perturbations, 
 and that the linearized Einstein-Maxwell equations imply that these scalar fields are pointwise bounded  on the outer static region. 
 \end{abstract}

\pacs{04.50.+h,04.20.-q,04.70.-s, 04.30.-w}

\maketitle

\tableofcontents

\section{Introduction}

The Einstein-Maxwell field equations with cosmological constant $\L$ 
\begin{align} \label{efe}
&G_{\a \b} + \L g_{\a \b} = 8 \pi T_{\a \b}, \\  \label{Tab}
&T_{\a \b} = \tfrac{1}{4\pi} \left( F_{\a \g} F_{\b}{}^{\g} -\tfrac{1}{4} g_{\a \b} F_{\g \d}F^{\g\d} \right),\\
& \nabla_{[\a} F_{\b \gamma]} =0, \label{max1}\\  
&\nabla^{\b} F_{\a \b} = 0, \label{max2}
\end{align}
admit the solution 
\begin{align} \label{rn}
  ds^2  &= -f(r) dt^2 + \frac{dr^2}{f(r)} + r^2 (d \theta^2 + \sin ^2 \theta \; d \phi^2 ), \\
  \label{max0}
F &= E_0 \; dt \wedge dr, \;\; E_0=\frac{Q}{r^2},
  \end{align}
where the norm $f(r)$ 
   of the Killing vector $\p / \p t$ in (\ref{rn}) is 
\begin{equation} \label{f}
f(r) = 1 - \frac{2M}{r} + \frac{Q^2}{r^2} - \frac{\L}{3} r^2.
\end{equation}
Note that $r$ has geometrical meaning: is (the square root of one forth of) the 
areal radius of the spheres of symmetry under $SO(3)$. Note also that 
 $f = g^{\a \b} \nabla_\a r \nabla_\b r$. \\

We assume $\Lambda \geq 0$.  $M>0$ and $Q$ are constants of integration, they  
correspond to  mass and  charge respectively and we assume that 
 their values are such that (\ref{rn}) is a non extremal black hole, that is 
\begin{equation} \label{ff}
f = -\frac{\Lambda}{3r^2} (r-r_i)(r-r_h)(r-r_c)(r+r_i+r_h+r_c),
\end{equation}
where $0<r_i<r_h<r_c$ are the inner, event and cosmological horizons respectively.\\

We are interested in proving the {\em non-modal linear stability} of the outer 
static region $r_h<r<r_c$ of 
the solution (\ref{rn})-(\ref{max0}) of the field equations (\ref{efe})-(\ref{max2}). This 
concept of stability was defined in \cite{Dotti:2013uxa,Dotti:2016cqy} and implies proving that:

\begin{enumerate}[i)]
\item there are  gauge invariant (both in the Maxwell and infinitesimal diffeomorphism senses) 
scalar fields from the spacetime ${\cal M}$ 
into  $\mathbb{R}$ that contain the same information as  the gauge class 
$[(\mathcal{F}_{\a \b}, h_{\a \b})]$ of the perturbation $(\mathcal{F}_{\a \b}, h_{\a \b})$. 
Here  $\mathcal{F}_{\a \b}=\d F_{\a \b}$ is the first order perturbation of the electromagnetic 
  field and $h_{\a \b} = \d g_{\a \b}$  is the metric perturbation. These scalar fields then 
  measure the distortion of the geometry and the Maxwell field and the perturbation fields  
 $h_{\a \b}$  and   $\mathcal{F}_{\a \b}$   in a given gauge  
can be obtained by applying a linear functional on them. 
\item The gauge invariant curvature  fields are pointwise bounded on the outer static region by constants that depends on 
the initial data of the perturbation on a Cauchy surface for that region.
\end{enumerate}
 For odd perturbations i) and ii) were proved  in 
the companion  paper \cite{julian}. In 
this paper we complete the proof of nonmodal linear stability of the charged black hole by considering 
 the even sector of the linear perturbation fields. \\

A discussion of the relevance of the nonmodal linear stability concept above, which was introduced in \cite{Dotti:2013uxa},  
can be found in Section I of \cite{Dotti:2016cqy}. For the Schwarzschild spacetime, the strategy behind the proof of 
nonmodal linear stability 
in \cite{Dotti:2013uxa} was using the  supersymmetric  even/odd duality to show that 
 {\em both odd  and even} linear gravity perturbation equations 
are equivalent to  (independent)
 four dimensional Regge-Wheeler equations. 
This also holds for Schwarzschild de Sitter, a detailed proof covering the $\L \geq 0$ 
 cases 
is given in   Lemma 
7 in \cite{Dotti:2016cqy}.) Once the linearized gravity problem is reduced to uncoupled  four dimensional 
scalar wave equations with a time 
independent 
potential, it is possible to place pointwise bounds on the geometric scalar fields mentioned above, and to  analyze their decay 
along 
future causal directions. This duality is of no use in the charged black hole case because the 
odd sector equations have the same level of complexity of those of the even sector and, and contrary to what happens 
in the $Q=0$ case, the set of odd mode equations is {\em not} equivalent to a four dimensional scalar field equation with 
a time independent potential. \\

Our emphasis in this series of papers is on finding the appropriate set of gauge invariant, curvature related scalar fields 
 encoding 
 the information of the gauge class 
of the perturbation; we do  not analyze their decay. \\

 We leave aside the asymptotically AdS $\L<0$ case. We do so because the dynamics of perturbations is non-unique 
in this case ---in particular, the notion of stability is ambiguous--- due to the conformal timelike boundary. In this case also,
 a choice of 
boundary conditions  at the conformal boundary generically breaks the even/odd duality, so that  the even sector perturbation 
equations  are {\em not} equivalent in (even in the uncharged case) to 
a four dimensional Regge-Wheeler equation, as happens for $\L \geq 0$
(for further details see Section IV in \cite{bernardo}.)\\

As in \cite{Kodama:2003kk}, the warped structure of the spacetime (\ref{rn})  
  ${\cal M} = {\cal N} \times_{r^2} \sigma$ 
  \begin{equation} \label{warped} 
g_{\a \b} dz^{\a} dz^{\b} = \tilde g_{ab}(y) \, dy^a dy^b + r^2(y) \hat g_{AB}(x)\,
  dx^A dx^B.
\end{equation}
is used    to simplify the linearized Einstein Maxwell equations (LEME. We also use the acronyms 
LEE for linearized Einstein equations and LME for linearized Maxwell equations). The  
``orbit manifold" ${\cal N}$ is  two dimensional and Lorentzian, with  line element 
$\tilde g_{ab}(y) dy^a dy^b$ ($= -f dt^2 + \frac{dr^2}{f}$ in Schwarzschild coordinates);  
the ``horizon manifold''  $\sigma$ with metric  $\hat g_{AB}(x)\,  dx^A dx^B$ is the unit two sphere (for 
a treatment of linearization around warped metrics in arbitrary dimensions and with constant curvature 
horizon manifolds  see \cite{Kodama:2003kk} 
and references therein.)  
In (\ref{rn}), $(t,r)$ coordinates are used for ${\cal N}$ and the standard angular coordinates 
$ \hat g_{AB}(x)\,  dx^A dx^B = d \theta^2 + \sin ^2 \theta \; d \phi^2$ are used for the unit sphere. In what follows 
our treatment is ``2D-covariant'', that is, it allows independent coordinate changes in ${\cal N} $ and the unit sphere. \\

Equation (\ref{warped}) illustrates our notation,  which we adopted from \cite{Chaverra:2012bh}: we use 
  lower case indexes $a, b, c, d, e$ for tensors on the orbit manifold ${\cal N}$, 
upper case indexes   $A, B, C, D,...$ 
for tensors on $S^2$, and Greek indexes  for space-time tensors. We  follow the additional  convention 
 in \cite{Dotti:2016cqy} that 
\begin{equation} \label{index}
\a=(a,A), \b=(b,B), \g=(c,C), \d=(d,D).
\end{equation}
Tensor fields {\em   introduced}
 with a lower $S^2$ index (say $Z_A$) and {\em then shown with an upper $S^2$ index} are assumed 
to have been acted  upon {\em with the unit $S^2$ metric inverse $\hat g^{AB}$},
 (i.e., in our example, $Z^A \equiv \hat g^{AB} Z_B$), 
and similarly with upper $S^2$ indexes moving down.  This has to be kept in mind to avoid wrong $r^{\pm2}$ factors 
in the equations. 
$\td_a, \tilde \e_{ab}$ and $\tilde g^{a b}$ are the covariant derivative, volume form (any chosen orientation) and metric inverse 
for the $\mathcal{N}$ orbit space; $\hd_A$ and  $\hat \e_{AB}$ are the covariant derivative and volume form $\sin (\theta) d\theta \wedge d\phi$ 
on the unit sphere. \\

The metric and Maxwell field perturbations $h_{\a \b}$ and   $\mathcal{F}_{\a \b}$ admit a series expansion  in   
 rank 0,1, and 2 eigentensor fields of the horizon manifold Laplace-Beltrami  (LB) operator,  with ``coefficients'' that are  
tensor fields on the orbit space $\mathcal{N}$
 \cite{Kodama:2003kk}. Individual terms of this series are called ``modes'', they are not mixed by 
 the LEME. In the standard modal approach a master scalar field  
$\mathcal{N} \to \mathbb{R}$ is extracted for each mode and the LEME is  
reduced to an infinite set of {\em scalar wave equations on} $\mathcal{N}$ (that is, 1+1 wave equations), one for each master 
mode. Modal stability consists in proving the boundedness/decay of these master fields. 
This was proved in four dimensional General Relativity 
in the seminal black hole stability papers \cite{Regge:1957td} \cite{Zerilli:1970se} \cite{Zerilli:1974ai}  and in higher 
dimensions more recently by Kodama and 
Ishibashi  (see, eg.g. \cite{Kodama:2003kk} and \cite{Ishibashi:2011ws}). 
All notions of linear stability prior to \cite{Dotti:2013uxa} were {\em modal}, that is, restricted  to 
  the boundedness of the 1+1 master fields.   For four dimensional charged black holes 
  the modal linear stability in the case  $\L=0$ 
  was  proved by Zerilli and  Moncrief  in the series 
of articles \cite{Zerilli:1974ai, Moncrief:1974gw, Moncrief:1974ng, Moncrief:1975sb} (see also \cite{wald}) \\

The limitations of the modal linear stability  are explained   in \cite{Dotti:2013uxa} and 
\cite{Dotti:2016cqy} (see the Introduction of \cite{Dotti:2016cqy} for a detailed explanation). 
These two papers are devoted to 
 the nonmodal linear stability of the Schwarzschild and Schwarzschild de Sitter black hole. 
The nonmodal linear stability of the \rn black hole with $\L \geq 0$, under odd perturbations,  
was established in \cite{julian}. 
In the following Sections we complete the proof of nonmodal linear stability of this black hole by proving its  stability 
under even perturbations. 
We do so by showing  that there are two fields made out of gauge invariant first order perturbations of curvature scalars 
(for details refer to Section \ref{invs}). These fields 
encode all the gauge invariant information of arbitrary even perturbations, allow to reconstruct 
the metric and Maxwell field perturbations in a given gauge, and are pointwise bounded.

\section{Linearized Einstein-Maxwell equations} \label{rwz}

The LEME are obtained by linearizing equations  (\ref{efe})-(\ref{max2}), that is, we assume
  that there is a smooth   one-parameter set 
of solutions $(g({\ve})_{\a \b}, F(\ve)_{\a \b})$ 
of the Einstein-Maxwell equations (\ref{efe})-(\ref{max2}) such that  $ (g({\ve=0})_{\a \b}, F(\ve=0)_{\a \b})$ are the 
 Reissner-Nordstr\"om  fields  (\ref{rn})-(\ref{max0}), take the derivative with respect to $\ve$ and evaluate it 
 at $\ve=0$. The resulting  equations are linear in  the {\em perturbation fields} $h_{\a \b}:= d g_{\a \b} / d\ve |_0$ 
 and  ${\mathcal{F}}_{\a \b}:=d F_{\a \b} / d\ve |_0$.  \\

The linearization of equation (\ref{max1}) gives $d \mathcal{F}=0$. Since the region we are interested in 
(see Theorem \ref{2} for details) is homeomorphic to $\mathbf{R}^2 \times S^2$, then of the same homotopy type of $S^2$, 
$d \mathcal{F}=0$  implies that there exists $A_{\a}$ such that 
\begin{equation} \label{dF}
\mathcal{F}_{\a \b} = \p_{\a} A_{\b} - \p_{\b} A_{\a} + p \; \hat \e _{\a \b},
\end{equation}
where $p$ is a constant and $\hat \e _{\a \b}$ is the pullback to $\mathcal{M}$ of the $S^2$ volume form $\hat \e _{A B}$. 
 Under the index 
convention (\ref{index}) the covector field $A_{\a}$ is written as 
\begin{equation} \label{vd}
A_{\a}=(A_a, A_A) 
\end{equation}
and, as explained in \cite{Dotti:2016cqy,Ishibashi:2004wx},  admits a decomposition 
in a set of even (+) and  odd (-) fields: 
\begin{equation} \label{Adecomp}
A_{\a}=(A^+_a, \hd_A A^+ + \hat \e_{A}{}^C \hd_C A^-), 
\end{equation}
 Even and odd fields  are characterized by the way they transform 
   when pull backed by the 
antipodal map $P$ on $S^2$ \cite{Dotti:2013uxa}. 
Note that $\hat \e _{\a \b}$ is an odd field, and that  equations (\ref{dF})-(\ref{Adecomp}) imply that we can replace 
\begin{equation} \label{adecomp}
\mathcal{F}_{\a \b} \text{ with }\{ A_{a}^+, \; A^+ \} \cup \{ A^- , p \},
\end{equation}
The constant $p$ associated with the  odd Maxwell field perturbation $ p \; \hat \e _{\a \b}$ corresponds 
to turning on a magnetic charge \footnote{This odd perturbation was disregarded in the companion paper 
\cite{julian}. It produces no backreaction {\em at first order}: $Q^2$ in (\ref{f}) has to be replaced with 
$Q^2+p^2$.}.
The scalar fields $A^{\pm}$ are unique if they are required to belong to $L^2(S^2)_{>0}$ \cite{Dotti:2016cqy}. 
Here $L^2(S^2)_{>\ell_o}$
is the  space 
of square integrable functions on $S^2$ orthogonal to the $\ell=0, 1,...,\ell_o$ eigenspaces of the Laplace-Beltrami (LB) 
 operator, and 
 $\ell$ labels the LB scalar field eigenvalue $-\ell (\ell+1)$. \\

Similarly, a symmetric tensor field
 $S_{\a \b} = S_{(\a \b)}$, such as $h_{\a \b}$,  $\mathcal{G}_{\a \b} := d G_{\a \b} /d\ve |_0$  
and $ \mathcal{T}_{\a \b} :=d T_{\a \b}/ d\ve |_0$,  decomposes as \cite{Ishibashi:2004wx, Dotti:2016cqy,julian}
\begin{equation} \label{std1}
S_{\a \b} = \left( \begin{array}{cc} S_{ab} & S_{a B} \\ S_{A b} & S_{AB} \end{array} \right),
\end{equation}   
with
\begin{equation} \label{std2}
S_{a B} = \hd_B S_a^+ + \hat \e_{B}{}^C \hd_C S_a^-.
\end{equation}
Assuming that  $S_a^{\pm} \in L^2(S^2)_{>0}$, they 
are unique \cite{Dotti:2016cqy,Ishibashi:2004wx}. $S_{AB}=S_{(AB)}$ 
further decomposes as 
\begin{equation} \label{std3}
S_{AB} = \hd_{(A} ( \e_{B) C} \hd^C S^-) + \left( \hd_A \hd_B - \frac{1}{2} \hat g_{AB} \hd^C \hd_C \right) S^+ + \tfrac{1}{2} \; S_T^+ \; \hat g_{AB} , 
\end{equation}
where $S_T^+ = S_C{}^C$
and the fields   $S^{\pm} \in L^2(S^2)_{>1}$ are unique.  \\

 In this way, as happens for covector fields (equation (\ref{adecomp})), the symmetric tensor field $S_{\a \b}$ 
is replaced by a set of even  and  odd fields
\begin{equation} \label{std4}
\{ S_{ab}^+=S_{ab}, \; S_a^+, \; S^+, \; S_T^+ \} \cup \{ S_a^-, \; S^- \}.
\end{equation}
In particular, the perturbed metric, Einstein tensor and energy momentum tensors contain the  fields 
\begin{align} \label{hdecomp}
{h}_{\a \b} &\sim \{  h_{ab}^+, \; h_a^+, \; h^+, \; {h}_T^+ \} \cup \{ h_a^-, \; h^- \}\\
\label{Gdecomp}
\mathcal{G}_{\a \b} &\sim \{  G_{ab}^+, \; G_a^+, \; G^+, \; G_T^+ \} \cup \{ G_a^-, \; G^- \}\\
\mathcal{T}_{\a \b} &\sim \{  T_{ab}^+, \; T_a^+, \; T^+, \; T_T^+ \} \cup \{ T_a^-, \; T^- \}. \label{Tdecomp}
\end{align}
Even and odd fields  are not mixed by the LEME.   The restriction of the LEME to the odd sector was 
the subject of \cite{julian}, even perturbations are studied in the following sections. \\

Let $J_{(1)}, J_{(2)}$ and $J_{(3)}$ be $S^2$ (and therefore
 spacetime) Killing vector fields corresponding to rotations 
around orthogonal axis in $\mathbb{R}^3 \supset S^2$, normalized such that the length of their 
closed orbits in the unit sphere  is   $2 \pi$ 
(e.g., $J_{(3)} = \p / \p_{\phi}$).
The square angular momentum operator
\begin{equation} \label{j2}
\j2 \equiv (\pounds_{J_{(1)}})^2 + (\pounds_{J_{(2)}})^2 + (\pounds_{J_{(3)}})^2, 
\end{equation}
 is defined both in $S^2$ and the spacetime.  
This operator 
commutes with the LEME and preserves parity. It thus allows a
 further decomposition of even and odd fields  into modes (eigenfields of $\j2$). 
 On $S^2$ scalars the operator $\mathbf{J}^2$ agrees with the  LB 
 operator of $S^2$, 
$\hd^A \hd_A$; however, on higher rank tensors  these two operators 
act differently. Since  $[\nabla_a, \pounds_{J_k}] = 0 = [\hd_A, \pounds_{J_k}] = [\td_a, \pounds_{J_k}]$, it 
follows that $\mathbf{J}^2$ 
   commutes 
with $\nabla_{\a}$, $\td_a$ and $\hd_A$. 
In a  modal decomposition approach 
 the tensor fields on the right sides of (\ref{hdecomp})-(\ref{Tdecomp})  
into eigenfields of $\j2$. \\

In the following sections we restrict ourselves to even perturbations and assume the restrictions above: 
$A^{+}, S_a^{+} \in L^2(S^2)_{>0}$, $S^{+} \in L^2(S^2)_{>1}$. These conditions   guarantee 
that  the  linear operators $(A_a^+,A^+) \to A_{\a}$ in (\ref{Adecomp}), 
and $\{ S_{ab}^+, \; S_a^+, \; S^+, S_T^+ \}  \to S_{\a \b}$ 
in (\ref{std1})-(\ref{std3}) are injective (Lemma 2 in \cite{Dotti:2016cqy}). 
Since we restrict to even perturbations, there is no risk of confusion and  + superscripts will be suppressed 
from now on.

\subsection{Even sector perturbations} \label{osp}

Even  perturbations are those for which the minus fields in (\ref{Adecomp})  and (\ref{hdecomp}) are  zero. 
Re-scaling and dropping the + superscripts, $h_T^+ =: r^2 h_T$, $h^+ =:2r^2 h$,  gives 
\begin{equation} \label{pert1} \renewcommand*{\arraystretch}{1.3}
h_{\a \b} = \left( \begin{array}{cc} h_{ab} & \;\hd_B h_a \\ \hd_A h_b & 
\;r^2\left[\left( 2\hd_A \hd_B - \hat g_{AB} \hd^C \hd_C \right) h + \tfrac{1}{2} \; h_T\; \hat g_{AB} \right]\end{array} \right) 
\end{equation}
with the restrictions $h_a \in L^2(S^2)_{>0}$, $h \in  L^2(S^2)_{>1}$. \\
Similarly, equations (\ref{dF}) and the even piece of (\ref{Adecomp})  give 
(dropping  superscripts) 
\begin{equation} 
\f_{\a \b} = \left( \begin{array}{cc} \td_aA_b - \td_bA_a &\;  \td_a \hd_B A - \hd_B A_a \\
  \hd_A A_b -\td_b \hd_A A &\;  0
 \end{array} \right) \label{pert2}
\end{equation}  
with $A \in L^2(S^2)_{>0}$. \\

$U(1)$ gauge transformations of the Maxwell field leave $\f_{\a \b}$ invariant while changing the potential as 
$A_{\a} \to A_{\a } + \p_{\a} B$.  
The even piece of the vector potential (\ref{Adecomp}) then changes as $A_a \to A_a + \p_a B$ and  
$A \to A + B_{(>0)}$, where $B_{>0}$ is the projection of $B$ onto  $L^2(S^2)_{>0}$. 
\\

Under a {\em coordinate} gauge transformation (infinitesimal diffeomorphism)  along the even  vector 
field defined by 
\begin{equation} \label{gtcv}
X_{\a}=(X_a, r^2 \hd_A X), \;\;\;  
X \in L^2(S^2)_{>0},
\end{equation}
 $h_{\a \b}$ and $\f_{\a \b}$ transform into the  physically equivalent  fields:
\begin{equation} \label{prime}
h'_{\a \b} = h_{\a \b} + \pounds_{X} g_{\a \b}, \;\; \f'_{\a \b}=\f_{\a \b} + \pounds_{X} F_{\a \b}.
\end{equation}
From (\ref{rn}), (\ref{max0}),  (\ref{pert1}), (\ref{pert2}), (\ref{gtcv})  and (\ref{prime}) we find that   (\ref{prime}) 
 is equivalent  to 
\begin{equation} \label{gauge}
\begin{rcases}
&h_{ab}   \to h'_{ab}=h_{ab} + \td_a X_b +\td_b X_a, \\ 
& h_T   \to h_T' = h_T + \frac{4}{r}\td^a r\; X_a + 2\hd^C \hd_CX \;\;\; \\
&A_b \to A_b' =  A_b -  \widetilde \e_{bc} X^c E_0\\
\end{rcases} \text{ all } \ell
\end{equation}
where the legend ``all $\ell$" reminds us that these fields have projections on all the $\ell$ subspaces, whereas 
\begin{equation}\label{gauge>0}
\begin{split}
& A \to A'=A, \;\; (\ell>0 \text{ only}),\\
  & h_a  \to h_a'= h_a + X^{>0}_a + r^2\td_aX, \;\; (\ell >0 \text{ only}),\\ 
&h  \to h'= h + X_{>1}, \;\; (\ell >1 \text{ only}).
 \end{split}
\end{equation}

\subsubsection{$\ell=0$: solution of the LEME and linearized Birkoff theorem}

Given that $\ell=0$ correspond to the spherically symmetric part of the perturbation,  
$\ell=0$ perturbations to the spherically  symmetric \rn background that solve the LEME should amount, 
in view of Birkhoff's theorem,  to a  modification of the parameters $Q$ and $M$ 
in (\ref{rn})-(\ref{f}). In this Section we prove that this is the case.\\

On $\ell=0$, the fields $h_a$,  $h$, $A$ and $X$  have trivial projections, and we can use (\ref{gauge})  
as in \cite{Dotti:2016cqy}, choosing $\td^arX^{(\ell=0)}_a=-\frac{r}{4}h_T^{(\ell=0)}$ 
to set $h_T'=0$ and then 
$2\td^a X^{(\ell=0)}_a = -g^{ab}  h_{ab}^{(\ell=0)}$ to get a traceless $h'_{ab}$. 
Dropping primes, the resulting metric perturbation is of the form 
\begin{equation} \label{rwhL=0}
h_{\a \b}^{(\ell=0)} = \left( \begin{array}{cc} h_{ab}^{T,(\ell=0)} & \;\;0 \\ 0 & 
\;\;0 \end{array} \right), \;\;\; \;\;\; \tilde{g}^{ab} h_{ab}^{T,(\ell=0)}=0.
\end{equation} 
This gauge choice admits  a  
 residual freedom $X_\a=(X_a,0)$ preserving the conditions   (\ref{rwhL=0}), for which $X_a$ must satisfy
\begin{equation}\label{regauge}
 (\td^ar)X_a=0,\;\;\; \;\td^a X_a=0,
\end{equation}
whose solution is 
\begin{equation} \label{X(r)}
 X_a=\widetilde{\e}_{ab}\td^b X(r).
\end{equation}
Since the $\ell=0$ piece of $A$ is trivial, the  $\ell=0$ Maxwell field is 
\begin{equation}  \label{l=0F}
\f_{\a \b}^{(\ell=0)} = \left( \begin{array}{cc} \td_aA_b^{(\ell=0)} - \td_bA_a^{(\ell=0)} &\;  0 \\
  0 &\;  0
 \end{array} \right) 
\end{equation}  
and the linearization of (\ref{max2}) reduces to
\begin{equation}
 \td^b\left(r^2\left(\td_aA_b^{(\ell=0)}-\td_bA_a^{(\ell=0)}\right)\right)=0.
\end{equation}
Defining $\td_aA_b^{(\ell=0)}-\td_bA_a^{(\ell=0)}=:\widetilde{\e}_{ab}\mathcal{E}^{(\ell=0)}$ the above equation reads 
\begin{equation}
\widetilde{\e}_{ab}\td^b\left(r^2\mathcal{E}^{(\ell=0)}\right)=0.
\end{equation}
Its  solution,
\begin{equation}
 \mathcal{E}^{(\ell=0)}=\frac{q}{r^2}, \label{Maxper}
\end{equation}
corresponds to a change   in  charge $Q \to Q + \epsilon q$, as anticipated.  \\

To complete our proof of the ``linearized Birkoff theorem'' we choose coordinates $(t,r)$ in orbit 
space,  work in the transverse gauge (\ref{rwhL=0}) and  use the residual gauge freedom (\ref{X(r)}) 
to set $h_{tr}=0$ (this fixes $X(r)$ in (\ref{X(r)})  up to a linear function of $r$). 
Using this additional 
condition together with the trace-free condition $h_{tt}= f^2 h_{rr}$ and $A_t=q/r$, $A_r=0$,  the $t-r$ 
 component of  the LEE 
\begin{equation} \label{LEME1}
 {\mathcal{G}_{\a\b}}^{(\ell=0)}+\Lambda {h_{\a\b}}^{(\ell=0)}=8\pi\mathcal{{T}}^{(\ell=0)}_{\a\b}\\
\end{equation}
gives $\p_t h_{rr}=0$, so that  $h_{rr}$ and $h_{tt}= f^2 h_{rr}$ depend only on $r$. 
Inserting this condition in the $r-r$ LEE (\ref{LEME1})  gives
\begin{equation} \label{hrr0}
h_{rr}= - \frac{2r^2(qQ-mr)}{r^4 f^2}, 
\end{equation}
where $m$ is a constant of integration. We conclude that 
\begin{equation} \label{htt0}
h_{tt} = f^2 h_{rr} = -\frac{2qQ}{r^2} + \frac{2m}{r}. 
\end{equation}
Note that (\ref{hrr0}) and (\ref{htt0}) 
 correspond precisely to, respectively, $(m \p/\p_M + q \p/\p_Q) f^{-1}$ and  $(m \p/\p_M + q \p/\p_Q) (-f)$, so we 
 recognize that $m$ and $q$  correspond respectively to   
 first order variations  $\d M$ and $\d Q$ of the mass and charge in the background \rn metric. 

\subsubsection{$\ell=1$ modes: gauge choice} 

Using the gauge freedom (\ref{gauge}) we can put the metric perturbation in Regge-Wheeler (RW) form:
\begin{equation}  \renewcommand*{\arraystretch}{1.3}
{}^{RW}h_{\a \b}^{(\ell=1)} = \left( \begin{array}{cc} h_{ab}^{(\ell =1)} &0 \\ 0 & 
\;\tfrac{r^2}{2} \; \hat g_{AB} \; h_T^{(\ell = 1)} \end{array} \right) \label{rw=1}
\end{equation} 
Contrary to what happens for $\ell>1$, for $\ell=1$ there is no unique RW gauge: once the metric is put in  RW form 
(\ref{rw=1}),  
we can gauge transform it into a {\em different} RW gauge using a gauge vector  
of the form $X_\a=(X_a,\hd_AX)$ 
with 
\begin{equation}\label{rgfL=1a}
X^{(\ell=1)}_c=-r^2\td_cX^{(\ell=1)}.
\end{equation}
We will use this  gauge freedom to further set 
\begin{equation} \label{thl1}
\tilde h^{(\ell=1)} := \tilde g^{ab} h_{ab}^{(\ell=1)}=0.
\end{equation}
We will assume the RW traceless gauge conditions (\ref{rw=1}) and (\ref{thl1}) when solving the LEME. 
Note that this does not exhaust the gauge transformations   (\ref{rgfL=1a}): a residual gauge freedom 
keeping these conditions is one for which the gauge vector 
satisfies  (\ref{rgfL=1a}) together with 
\begin{equation} \label{rgfL=1b}
 \td^c(r^2 \td_c X^{(\ell=1)})=0.
\end{equation}

\subsubsection{$\ell \geq 2$ modes: gauge choice and gauge invariants} \label{slgeq2}

For $\ell \geq 2$ the field
\begin{equation}
p_a=h^{(\geq 2)}_a-r^2\td_a h
\end{equation}
transforms as $p_a \to p_a' = p_a + X_a^{(\geq 2)}$. This 
allows to construct the following ($\ell \geq 2$) gauge invariant fields 
(we use (\ref{pert2})-(\ref{gauge>0})): 
\begin{equation}
\begin{rcases}\label{ginvs}
H_{ab} := h^{{(\geq 2)}}_{ab} - \td_a p_b - \td_b p_a \\
H_T := h^{(\geq 2)}_T - \frac{4}{r}p_a\td^ar-2\hd^C \hd_Ch^{(\geq2)} \\
\mathcal{E}\;\widetilde{\e}_{ab}:= \f_{ab}^{\geq 2} - \widetilde{\e}_{ab}\td_c \left(E_0p^{c}\right)  \\
r \widetilde{\e}_{ab}\;\hd_B\mathcal{E}^b := \f_{aB}^{\geq 2}-\widetilde{\e}_{ab}E_0\td_Bp^{b}
\end{rcases} \ell \geq 2 \text{ gauge invariant fields}
\end{equation}
The RW gauge is defined by the condition 
$p_a=0$. It is unique, since any nontrivial  gauge transformation 
(\ref{gauge})-(\ref{gauge>0})  
requires $X \neq 0$ to keep $h_a=0$, and this spoils the condition  $h=0$. 
\begin{equation}  \renewcommand*{\arraystretch}{1.3} 
{}^{RW}h_{\a \b}^{\geq 2} = \left( \begin{array}{cc} H_{ab} & 0 \\ 0 & \frac{r^2}{2}\hat g_{AB} H_T  
\end{array} \right). \label{pertrw1}
\end{equation} 
Note that this is formally identical to (\ref{rw=1}). 

\subsubsection{Recasting the linearized $\ell \geq 2$ equations}

In what follows we will decompose $S^2$ and orbit space symmetric 2-tensors into their traceless a 
pure trace pieces as
\begin{align}\label{deco}
S_{ab} &= S_{ab}^T + \tfrac{1}{2} g_{ab} \tilde S, \;\;\;\tilde S := S_{ab} g^{ab}\\
S_{AB} &= S_{AB}^T + \tfrac{1}{2} \hat g_{AB} \widehat S, \;\;\; \widehat S :=S_{AB} \hat g^{AB}.
\end{align}
We will assume the linearized $\ell \geq 2$ 
Maxwell  field is given by (\ref{pert2}) and that the linearized $\ell \geq 2$ metric 
is in RW form  (\ref{pertrw1}). \\

Consider first the linearized Maxwell equations. 
Equations  (\ref{pert2}) and (\ref{pertrw1}) imply  that the $\beta=B$ components 
of the linearization of the Maxwell equation $\nabla^{\a}F_{\a \b}$ are equivalent to the condition 
\begin{equation} \label{c1}
\hd_B(\td^d A_d -\td^d \td_d A)=0,
\end{equation}
 which can be written as 
\begin{equation}
\widetilde{\e}^{ab}\td_a \left(r\mathcal{E}_{b}\right)=0, \;\; \; \mathcal{E}^b := r^{-1} \; \widetilde{\e}^{bc}(\td_c A-A_c).  
\label{mx2}
\end{equation}
This implies that
\begin{equation} \label{calA}
\mathcal{E}_b=-\frac{1}{r}\td_b \mathcal{A},
\end{equation}
for some scalar $\mathcal{A}$,  
and simplifies (\ref{pert2}) to
\begin{equation} \label{fab2}
\f_{\a \b} = \left( \begin{array}{cc}- \widetilde{\e}_{ab}\td_c\td^c\mathcal{A} & -\widetilde{\e}_{a}{}^c \td_c \hd_B
\mathcal{A} \\
\widetilde{\e}_{b}{}^c \td_c\hd_A \mathcal{A} & 0  \end{array} \right)
\end{equation} 
The 
  $\beta=b$ components of the linearization of  $\nabla^{\a}F_{\a \b}$ 
  then gives $E_0 \widetilde \e_a{}^b \p_b z=0$, 
  where 
  $ z:= -\tfrac{r^2}{Q}\td_a\td^a \A-\tfrac{1}{Q}\hd_A\hd^A \A-\frac{1}{2}\left(H_{ab}g^{ab}-h_T\right)$. 
  This gives 
  $z=z(\theta,\phi)$. However, in view of the $U(1)$ gauge freedom freedom $\A \to \A' = \A+ p(\theta,\phi)$ implicit in the definition 
  (\ref{calA}) of $\A$, and given that $z$ has no $\ell=0$ component, we can choose $p$ such that $\hd^B \hd_B p =Q z$, then for $\A'$ 
  we find $z'=0$ and (dropping the prime on $\A$)
\begin{equation}
\td_a\td^a \A+\frac{1}{r^2}\hd_A\hd^A \A=\frac{Q}{2r^2}\left(h_T - \widetilde{H}\right). \label{max}
\end{equation}
where $\widetilde{H}$ denotes the trace part of $H_{ab}$ according to (\ref{deco}). \\

From now on  we  switch from  $H_{ab}^T$ to the one form $C_a = H_{ab}^T \td^b r$, which contains  the same 
information, in view of the equality
\begin{equation}
H_{ab}^T = \frac{1}{f} \left( \td_a r \; C_b + C_a \; \td_b r - g_{ab} \; \td^d r \; C_d \right).
\end{equation}

Having solved the LME we proceed with the  LEE. The traceless  $S^2$ piece   
\begin{equation}
\mathcal{G}_{AB}^{T}+\Lambda h_{AB}^{T}=8\pi\mathcal{{T}}^{T}_{AB} \label{LEME11}
\end{equation}
 gives 
 \begin{equation} \label{th}
 \tilde H =0.
\end{equation}
The off-diagonal piece
\begin{equation}
\mathcal{G}_{Ab}+ \Lambda h_{Ab} = 8\pi\mathcal{{T}}_{Ab},  \label{LEME13}
\end{equation}
combined with the condition (\ref{th}) (and $h_{Ab}=0$), 
gives 
\begin{equation} \label{qb}
q_b:= \td_a  C^a \; \td_b r+ \widetilde{\e}^{ec} \td_e C_ c \; \widetilde{\e}_b{}^a \td_a r -\frac{f}{2}\td_bh_T-4fE_0\td_b \mathcal{A}=0.
\end{equation}
Contracting (\ref{qb})  with $\widetilde{\e}^{bd} \td_d r$ gives
\begin{equation}
\widetilde{\e}^{bd} \left[ \td_b C_d + \tfrac{1}{2} \td_b h_T \; \td_d r + 4E_0 \; \td_b \mathcal{A} \; \td_d r \right] =0.
\end{equation}
This allows to introduce the field $\xi$,  defined by
\begin{equation} \label{xi}
\td_d \xi = Z_d:= C_d - \tfrac{1}{2}  r \; \td_d h_T + 4  E_0 \;  \mathcal{A} \; \td_d r.
\end{equation}
Contracting (\ref{qb})  with $\td^b r$ and using the above equation then gives
\begin{equation} 
\td^a \td_a \xi +\frac{r}{2}\td^a\td_ah_T-8E_0(\td^ar)\td_a \mathcal{A}-4E_0 \mathcal{A} 
\td^a\td_ar+\frac{8E_0}{r}(\td^ar)(\td_ar) \mathcal{A}=0. \label{E21}
\end{equation}
Using equations (\ref{max}), (\ref{th}), (\ref{qb}) and (\ref{E21}) in the LEE
\begin{equation} \label{LEME12}
\widetilde{ \mathcal{G}}+\Lambda \widetilde{H}=8\pi \widetilde {\mathcal{T}} 
\end{equation}
gives  
\begin{equation}
\frac{2}{r} \td^a \td_a \xi - 8 E_0 \left( \frac{(\td^a \td_a r) \mathcal{A}}{r} + \frac{\hd^A \hd_A \mathcal{A}}{r^2} \right) 
+ \frac{4}{r^2} \td^a \xi \td_a r  = \left[ \left(\frac{\hd^A \hd_A +2}{r^2}\right) - 4 E_0^2 \right] h_T.
\end{equation}
Note that the operator on the right side above is invertible, this proves that all components of $\f_{\a \b}$ and $h_{\a \b}$ 
can be written  in terms of $\xi$ and $\mathcal{A}$. If we do so and use the remaining LEE, we arrive, after some work,  
to  
 the following system  of partial differential equations 
for $\xi$ and $\A$:
\begin{multline} \label{3.24}
 \left(\hd_A\hd^A+2f-r\td_a\td^ar\right) \left[\td_a\td^a\xi+\frac{2}{r}\td_a\xi \td^a r 
- \left(\hd_A\hd^A+r\td^a\td_ar\right)\left(\frac{4Q\mathcal{A}}{r^3}\right) \right]
\\+\left(\hd_A\hd^A+2-\frac{4Q^2}{r^2}\right)\left(\frac{1}{r^2}\hd_A\hd^A\xi-\frac{2}{r}\td_a\xi \td^a r +\frac{8fQ}{r^3}
\mathcal{A}\right) =0
\end{multline}

and

\begin{equation}\label{3.23}
\left(\hd_A\hd^A+2f-r\td_a\td^ar\right)\left(\td_a\td^a\A+\frac{1}{r^2}\hd_A\hd^A\mathcal{A}
\right) + \frac{8fQ^2}{r^4} \mathcal{A}
+\frac{Q}{r}\left(\frac{1}{r^2}\hd_A\hd^A\xi-\frac{2}{r}\td_a\xi \td^a r\right)=0 
\end{equation}
Note that, since they are derived from gauge invariant fields, $\A$ and $\xi$ above are gauge invariant.

\subsubsection{Solution of the $\ell=1$ LEME}

Equation (\ref{rw=1}) is formally identical to (\ref{pertrw1}),  the difference being that 
the latter is given in terms of gauge invariant fields. Thus, the steps (\ref{c1}) to (\ref{max}) from the previous 
Section hold for $\ell=1$ with the replacements $H_{ab}\to h_{ab}^{(\ell=1)}$, etc. 
Now, in view of equation (\ref{std3}), equation (\ref{LEME11}) is void for $\ell=1$. However, the trace free condition 
(\ref{th}) to where this equation leads  corresponds to the  traceless {\em gauge choice } 
(\ref{thl1})  for $\ell=1$. This implies that the reasoning
 following (\ref{th}) can also be taken without change for $\ell=1$. As a result, we obtain the system 
(\ref{3.24})-(\ref{3.23}) with $\A \to \A^{(\ell=1)}$, $\xi \to \xi^{(\ell=1)}$ (defined in a way analogous to 
 (\ref{xi})), and $\hd^A \hd_A 
\to -2$. \\

A conceptual difference between the $\ell=1$ and $\ell >1$ cases is that the fields
$\A$ and $\xi$, being defined from the gauge invariant fields, are themselves gauge invariant, whereas  $\A^{(\ell=1)}$ and 
$\xi^{(\ell=1)}$ are {\em not}. Tracing 
back the gauge transformations of the fields involved in their definition we  find that, under 
 the residual gauge freedom (\ref{rgfL=1a})-(\ref{rgfL=1b}) (note that  
$(2-2f+r\td_a\td^ar ) r=6M-4Q^2/r$),
 \begin{equation}
 \begin{split}
Z_a^{(\ell=1)}&=C_a^{(\ell=1)}-\frac{r}{2}\td_ah_T^{(\ell=1)}+4E_0A^{(\ell=1)}\td_ar \\
& \to Z_a^{(\ell=1)} + \left(6M-\frac{4Q^2}{r}\right)\td_aX^{(\ell=1)}+4E_0(QX^{(\ell=1)})\td_ar\\
                                       &     = Z_a^{(\ell=1)}+  \td_a\left[\left(6M-\frac{4Q^2}{r}\right)X^{(\ell=1)}\right]
\end{split}
\end{equation}
And then:
\begin{equation} \label{l1gta}
\xi^{(\ell=1)} \to \xi^{'(\ell=1)}=\xi^{(\ell=1)} + \left(6M-\frac{4Q^2}{r}\right)X^{(\ell=1)}.
\end{equation}
Also 
\begin{equation}
\A^{(\ell=1)}  \to {\A^{(\ell=1)}}' =\A^{(\ell=1)} + Q X^{(\ell=1)}. \label{l1gtb}
\end{equation}
{\em A priori}, equation (\ref{l1gtb}) 
 does not imply  that $\A$ is pure gauge, since the $X^{(\ell=1)}$ field is not arbitrary but restricted to 
the condition (\ref{rgfL=1b}). As we will see, the situation is quite subtle. \\

Equations (\ref{l1gta})-(\ref{l1gtb}) suggest that, for $\ell=1$,  
we replace in the LEME (\ref{3.24})-(\ref{3.23}) $\xi^{(\ell=1)}$ and $A^{(\ell=1)}$  by the gauge invariant field 
\begin{equation}\label{fi1}
\varphi := \frac{Q }{r\left(2-2f+r\td_a\td^ar \right)}\xi^{(\ell=1)} - \A^{(\ell=1)} =: \sum_m 
\varphi^{(m)} S_{(\ell=1,m)}.
\end{equation}
If we rewrite (\ref{3.23}) -(\ref{3.24}) in terms of $\varphi$ and $\A$, eliminate second order $\A$ derivatives from 
(\ref{3.23}) using (\ref{3.24}), we get a decoupled equation for $\varphi$:
\begin{equation} \label{LEME1a}
[ -f \td^a \td_a + V^{(\ell=1)} ] \varphi =0,
\end{equation}
where
\begin{equation} \label{V11}
 V^{(\ell=1)}  = - \frac{2 f}{3 r^4 (3Mr -2 Q^2)^2} 
\left( (4\, Q^4 \; \Lambda-27\,M^2) r^4 + 54M^2Q^2 r^2-48M Q^4 r+12 Q^6 \right)
\end{equation}
For $\A^{(\ell=1,m)}$ we obtain 
\begin{equation}\label{LEME1b}
r^{-2} \td^c(r^2 \td_c \A^{(\ell=1,m)}) = 2r^{-1} \td^c r \td_c \varphi^{(m)}   + Z(r) \varphi^{(m)} .
\end{equation}
with
\begin{equation}
Z(r) = \frac{-4Q^2\Lambda r^4+18Mr^3-24Q^2Mr+12Q^4}{3r^4(3Mr-2Q^2)}.
\end{equation}
$\varphi$ is a physical (gauge invariant) degree of freedom obeying (\ref{LEME1a}). 
Once a solution of this equation 
is picked, the source on the right side of (\ref{LEME1b}) is defined, and the solution  of (\ref{LEME1b}) will be  unique 
up to a solution of the homogeneous equation. However, since the homogeneous equation agrees with (\ref{rgfL=1b}), 
in view 
of (\ref{l1gtb}), any two solutions of  (\ref{LEME1b}) are gauge related and therefore equivalent. This implies that
{\em  the 
gauge class} of $\A^{(\ell=1,m)}$ is uniquely determined once the three gauge invariant functions on the orbit space  
$\varphi{(m)}$ 
are given, and then 
$\varphi$ contains 
the only degrees of freedom in the $\ell=1$ subspace (three functions defined on the orbit space).\\

This situation should be contrasted with that of the projections on 
the higher harmonic subspaces $\ell \geq 2$, for which the number of 
degrees of freedom is {\em two} (instead of one) functions on the orbit space for every $(\ell,m)$: the harmonic 
components solutions $\Phi_n^{(\ell,m)}$ of the Zerilli fields $\Phi_n$, $n=1,2$ (see next Section). It should also be 
contrasted with the Schwarzschild black hole case, for which the even $\ell=1$ 
mode is pure gauge \cite{Dotti:2016cqy}.

\subsubsection{Solution of the $\ell>1$ LEME}

To decouple the system (\ref{3.24})-(\ref{3.23})  we introduce 
\begin{align} \label{k1}
\kappa_1 &= -\left(\frac{1}{\j2+2f-r\td_a\td^ar}\right)\sqrt{-(\j2+2)}\;  \xi, \\
\kappa_2 &= -\frac{2Q}{r}\left(\frac{ 1 }{\j2+2f-r\td_a\td^ar}\right)\;\xi^{(\ell \geq 2)}  - 2 \A. \label{k2}
\end{align}
$\j2$ acts as a $-\ell(\ell+1)$  factor on the $\ell$ subspace of $L^2(S^2)$ then, e.g., if 
$\xi = \sum_{(\ell,m)} \xi^{(\ell,m)} S_{(\ell,m)}$ is the expansion of $\xi$ in  spherical harmonics 
$S_{(\ell,m)}$,  the linear operator   in the definition of $\kappa_1$ above acts as 
\begin{equation}
-\left(\frac{1}{\j2+2f-r\td_a\td^ar}\right)\sqrt{-(\j2+2)} \sum_{(\ell,m)} \xi^{(\ell,m)}S_{(\ell,m)} = 
 \sum_{(\ell,m)} \left(\frac{\sqrt{(\ell+2)(\ell-1)}}{\ell(\ell+1)-2f+r\td_a\td^ar}\right) \xi^{(\ell,m)}  S_{(\ell,m)}.
\end{equation}
Using the fact that on scalar fields $\hd^A \hd_A = \j2$, 
the projection of equations  (\ref{3.24})-(\ref{3.23}) on the $\ell>1$ space can then be written  as 
\begin{equation}  \renewcommand*{\arraystretch}{1.5}
\left( \begin{array}{cc}
-\td^a \td_a +\mathbf{U}-3M \mathbf{W} & 2Q \sqrt{-(\j2 +2)}\; \mathbf{W} \\
 2Q \sqrt{(-\j2+2)}\; \mathbf{W}  & -\td^a \td_a +\mathbf{U}+3M \mathbf{W}
\end{array} \right) \; \left(\begin{array}{c} \kappa_1 \\ \kappa_2 \end{array} \right) =0 \label{mop}
\end{equation}
where  $\mathbf{U}$ and $\mathbf{W}$ entering the  symmetric matrix operator $\mathcal{O}$ in  above are 
defined by 
\begin{align}
\left[ r^2\left(\j2+2f-r\td_a\td^ar+r^2\Lambda\right)^2 \right] \mathbf{U} = & -(\j2+2)^3+
\left(2+\frac{9M}{r}-\frac{4Q^2}{r^2}\right)(\j2+2)^2 + \nonumber \\ 
&\left(\frac{3M}{r}+\frac{9M^2+2Q^2}{r^2}-\frac{16Q^2M}{r^3}+\frac{6Q^4}{r^4}+\frac{2\Lambda Q^2}{3}\right)(\j2+2) \nonumber \\
&+4\left(\frac{9M^2}{r^2}+\frac{9M^3}{r^3}-\frac{39Q^2M^2}{r^4}+\frac{32Q^4M}{r^5}-\frac{8Q^6}{r^6}\right) \nonumber \\
&-\frac{4r^2\Lambda}{3}\left(\frac{9M^2}{r^2}-\frac{12Q^2M}{r^3}+\frac{8Q^4}{r^4}\right) 
\end{align}

and 
\begin{equation}
\left[ r^3\left(\j2+2f-r\td_a\td^ar+r^2\Lambda\right)^2 \right]  \mathbf{W} =  (\j2+2)^2 -4(\j2+2)+\frac{4M}{r}\left(3-\frac{3M}{r}+\frac{Q^2}{r^2}\right)
+\frac{4\Lambda}{3}\left(3Mr-4Q^2\right).
\end{equation}
The matrix 
$\mathcal{O}$   can be diagonalized 
 by introducing $\Xi=\sqrt{9M^2-4Q^2(\mathbf{J}^2+2)}$, $\beta_n=3M+(-1)^n\Xi$, $n=1,2$ and
 \begin{equation} \label{P} \renewcommand*{\arraystretch}{1.5}
 P = \left( \begin{array}{cc} -\beta_1 & \beta_2 \\ 2Q \sqrt{-(\j2+2)} & -2Q \sqrt{-(\j2+2)} 
 \end{array} \right).
 \end{equation}
 We find that  
 \begin{equation} \label{pop}
 P^{-1} \mathcal{O} P = \left( \begin{array}{cc} -\td^a \td_a + U_1 & 0 \\ 0 & -\td^a \td_a +U_2 \end{array}
 \right),
 \end{equation}
 where 
 \begin{equation} \label{zeri}
 U_n = \mathbf{U} +\frac{(-1)^{n+1}}{2}(\beta_2-\beta_1) \mathbf{W}.
 \end{equation}
In view of (\ref{pop}),  the Zerilli fields 
 \begin{equation} \label{Pk}
 \left( \begin{array}{c}  \Phi_1 \\ \Phi_2 \end{array} \right) = P^{-1} \; \left( \begin{array}{c}  \kappa_1 \\ \kappa_2 \end{array} 
 \right)
 \end{equation}
 satisfy the equations 
 \begin{equation} \label{Z1}
[ -f \td^a \td_a + V_n ] \Phi_n =0, \;\; n=1,2, 
 \end{equation}
 where $V_n = U_n/f$ can be written in Ricatti form
 \begin{equation} \label{Ricatti}
 V_n=f\beta_n\partial_r f_n+\beta_n^2f_n^2+\mathbf{J}^2(\mathbf{J}^2+2)f_n, 
 \end{equation}
with 
\begin{equation} \label{fn}
f_n=\frac{f}{\left(r\beta_n-r^2\left(\mathbf{J}^2+2\right)\right)}.
\end{equation}
In $t-r$ coordinates (\ref{Z1}) reads
\begin{equation} \label{ZE2}
\p_t^2 \Phi_n+A_n \Phi_n=0
\end{equation}
where
\begin{equation} \label{AE}
A_n=- \p_{r^*}^2 + V_n.
\end{equation}
and $r^*$ is a tortoise coordinate, defined by $dr^*/dr=1/f$.\\
Since $\xi$ and $\A$ are gauge invariant fields, so are $\kappa_1, \kappa_2,$ 
and the Zerilli fields $\Phi_1$ and $\Phi_2$.\\

Tracing our definitions back we find that 
\begin{align} \label{zeri1}
 \A&=-\frac{Q}{r^2}\left[\left(r\beta_1-r^2\left(\mathbf{J}^2+2\right)\right) \Phi_1-
 \left(r\beta_2-r^2\left(\mathbf{J}^2+2\right)\right) \Phi_2\right]\\ \label{zeri2}
 \xi&=\left(\mathbf{J}^2+2f-r\td_a\td^ar\right)\left(\beta_1\Phi_1-\beta_2\Phi_2\right) 
\end{align}
and that the LEME (\ref{3.24}) (\ref{3.23}) reduce to the decoupled  Zerilli equations (\ref{Z1}). 
If we replace $\j2 \to -\ell(\ell+1)$ and 
use $(t,r)$ coordinates on the orbit space, $ [-f \td^a \td_a] \Phi$ reads $\p_t^2 \Phi+ f \p_r(f\p_r \Phi)$ 
and (\ref{Z1}) gives the Zerilli equation for the $\Phi_n^{(\ell,m)}$ harmonic components of $\Phi_n$, $n=1,2$, 
as found for $\Lambda=0$ by Zerilli in \cite{Zerilli:1974ai} and 
Moncrief in \cite{Moncrief:1974ng} and for $\Lambda \neq 0$ in \cite{Kodama:2003kk}.

\section{Non-modal linear stability for even perturbations} \label{nmsSect}

 From the results of the previous Section follows that 
the set $\mathcal{L}_+$ of  equivalent classes $ [(h_{\a \b},F_{\a \b})]$ of even 
 solutions $(h_{\a \b},\f_{\a \b})$ of the LEME 
 mod the Maxwell and the diffeomorphism  gauge equivalence relation (\ref{prime}),   
 can be parametrized by the first order variation $\d M$ and $\d Q$ of 
the mass  $M$ and charge $Q$ ($\ell=0$ modes), the $\ell=1$ field   $\varphi =\sum_m  \varphi_m S^{(\ell=1,m)}$,  
$\varphi_m: \mathcal{N} \to \mathbb{R}, m=1,2,3$ 
satisfying (\ref{LEME1a}), and the Zerilli fields $\Phi_n: \sum_{(\ell \geq 2, m)} \Phi_n^{(\ell,m)}
 S^{(\ell,m)}: \mathcal{M} \to \mathbb{R}$, $n=1,2$ 
(alternatively $\Phi_n^{(\ell ,m)}: \mathcal{N} \to \mathbb{R}$) obeying 
(\ref{Z1}) ($\ell \geq 2$ modes):
\begin{align}\nonumber
\mathcal{L}_+ &= \{ [(h_{\a \b},F_{\a \b})] \; | \; (h_{\a \b},F_{\a \b}) \; \text{ is a solution of the LEME } \} \\
 &= \{ (\d M, \d Q, \varphi, \Phi_n)  \; | \; \varphi \; \text{ satisfies (\ref{LEME1a}) and } 
   \Phi_n, n=1,2 \; \text{ satisfy (\ref{Z1})} \}.  \label{L+}
\end{align}
Although these fields and constants  measure the effects of the perturbation, 
there is a distinction 
 between the $\ell=0$ constants $\d M$ and $\d Q$, which have a clear physical meaning as mass and charge shifts 
  within the Kerr-Newman (A)dS  family, 
and the  $\ell\geq 1$ fields $\varphi, \Phi_1$ and $\Phi_2$. The latter  
 are  convenient to disentangle  the $\ell \geq 1$ LEME but  have, a priori,  no direct 
physical interpretation. I the following Section we will find scalar fields that substitute these and have 
a direct geometrical meaning.

\subsection{Measurable effects of the perturbations}  \label{invs}

There are sixteen real  algebraically independent
basic set of  scalars made out of the Riemann 
tensor in the Carminati-McLenaghan 
  \cite{carmi} basis. Any other scalar field made out of contractions of the tensor product of 
any number of Riemann tensors, volume form and metric tensor can be written as a polynomial 
on these basic scalars. Among these  there are 
six real fields (we follow the notation in \cite{carmi}):
\begin{equation} \label{invs1}
\{ R, r_1 , r_2 , r_3,  m_3, m_4 \}
\end{equation}
and   the five complex fields 
\begin{equation}\label{invs2}
\{w_1, w_2, m_1, m_2, m_5 \}.
\end{equation}
In the electro-vacuum case,   they are constrained by the following (seven real) sizygies \cite{carmi} : 
\begin{equation}
R=0, \; r_2=0, \; 4 r_3- r_1^2, \; m_4=0, \; m_1 \bar m_2 -r_1 \bar m_5=0, \;
m_2 \bar m_2 m_3 - r_1 m_5 \bar m_5=0,
\end{equation}
which leave $r_1,w_1,w_2,m_1,m_2$ as independent fields in the general electro-vacuum case. 
Note that these constraints do not define a manifold but an algebraic variety: the dimension of the tangent space 
(defined by the linearization of the constraints) may change at 
different points. \\

We may also consider invariants involving the Maxwell fields, as well as mixed invariants such as 
($C_{\a \b \g \d}$ the Weyl tensor)
\begin{equation}
F = F_{\a \b}F^{\a \b}, \;\; C= C_{\a \b \g \d} F^{\a \b} F^{\g \d}.
\end{equation}
Due to the symmetries of the background, it can be proved  that the imaginary part of first 
order variations  of the complex scalars $\delta w_1$, ...., $\delta m_5$ vanish trivially under even perturbations, so 
we will focus our attention on the first order variations 
\begin{equation}\label{df}
\d r_1,\;\; \d \Re w_1= \d w_1, \; \;
\d \Re w_2=\d w_2, \;\; \d \Re m_1= \d m_1,  \;\; \d \Re m_2= \d m_2, \d F, \d C.
\end{equation}
Note that \cite{carmi}
\begin{equation}
 w_1=\Re \w_1 =\tfrac{1}{8} C^{\a \b \g \d}C_{\a \b \g \d}. 
\end{equation}
Since the background values of these fields 
\begin{equation}\label{back1}
\begin{split}
{r_1}_o &= \frac{Q^2}{r^8}, \;\; {w_1}_o = \frac{6(Q^2-Mr)^2}{r^8}, \;\; {w_2}_o = \frac{6(Q^2-Mr)^3}{r^{12}}, 
F_o = - \frac{2Q^2}{r^4}\\
{m_1}_o &=\frac{2Q^4(Q^2-Mr)}{r^{12}}, \;\; {m_2}_o =\frac{4Q^4(Q^2-Mr)^2}{r^{16}}, \;\;
C_o = \frac{8Q^2(Q^2-Mr)}{r^8} \\
\end{split}
\end{equation}
do not vanish, none of the fields in (\ref{df}) is gauge invariant. However, it is possible to construct 
gauge invariant fields out of them, such as
\begin{equation} \label{s}
\s:= {w_1}_o' \d C -{C_o}' \d w_1,
\end{equation}
 etc, where the prime denotes derivative with respect 
to $r$.  Under a gauge transformation along $X^{\a}$ 
\begin{equation}
\s \to \s + {w_1}_o'  \; X^{\a} \p_\a C_o -C_o' \; X^{\a} \p_{\a} {w_1}_o = \s,
\end{equation}
since ${w_1}_o$ and $C_o$ (as every curvature scalar) depend only on $r$. This idea generalizes as follows: let 
$I_{(1)},...,I_{(s)}$ be a set of scalar curvature. 
Then $S=\sum_k f_k \d I_{(k)}$ is gauge invariant as long as the $f_k$ satisfy $\sum_k f_k {I_{(k)}}_o'=0$, since  
for a gauge transformation along $X^{\a}$,  
$\d I_{(j)}= X^r {I_{(j)}}_o'$ and $\d S =0$. For $s=2$ this 
reduces to  
\begin{equation} \label{K}
S = K ({I_{(1)}}_0' \d I_{(2)} - {I_{(2)}}_0' \d I_{(1)}) \\
\end{equation}

 When  
calculating the $\ell \geq 2$ projection of 
the first order even perturbation of the fields (\ref{df}) {\em in the RW gauge (\ref{pertrw1})}  
in terms of the Zerilli fields, we get expressions involving up to five derivatives of the $\Phi_n$. 
On shell, that is, assuming the LEME, we can use the Zerilli equation (\ref{Z1}) and its $r-$derivatives repeatidly and, 
 after  lengthy manipulations,  obtain 
simpler expressions 
involving only the $\Phi_n$ and their first $r-$ derivatives. 
We proved that, on shell,  
all the gauge invariant combinations 
 of the first order variation of the fields (\ref{df}) are proportional to each other. 
In other words,  there is a single 
independent gauge invariant combination of first order variation of curvature scalars. This  certainly could not 
carry the same information as the {\em two} fields $\Phi_n, n=1,2$.
Since all the algebraic gauge invariant curvature variation scalars are proportional on shell, it is irrelevant 
to our purposes of a non-modal approach which one we choose. For the field $\s$ in (\ref{s}) we found, after lenghty calculations with the help of 
symbolic manipulation programs, 
\begin{align}\label{s>1} 
\s^{(\ell>1)} &= \frac{96 Q^2 \j2 (\j2+2) (Mr-Q^2)^2}{r^{17}} \left((r \beta_2-4Q^2) \; \Phi_1- (r \beta_1-4Q^2) \; \Phi_2\right), \\ \label{s=1}
\s^{(\ell=1)} &=     \frac{384 \, Q (3Mr-2Q^2)(Mr-Q^2)^2}{r^{17}} \; \varphi, 
\end{align}
and 
\begin{equation}
\begin{split}\label{s=0}
\s^{(\ell=0)} &= \d M \left( {w_1}_o' \; \p_{M} C_o -C_o' \; \p_M {w_1}_o \right)  
+  \d Q \left( {w_1}_o' \; \p_{Q} C_o -C_o' \; \p_Q {w_1}_o \right)  \\
&= \frac{192 \, Q \; (Q^2-Mr)^2}{r^{16}} (3M \; \d Q- 2 Q \; \d M).
\end{split}
\end{equation}
It is an interesting fact that first order $r-$derivatives of the $\Phi_n$, which are present in $\d w_1$ and $\d C$, 
 cancel out in (\ref{s>1}). Note also that equation (\ref{s=1}) gives a geometrical interpretation for the 
gauge invariant $\ell=1$ field $\varphi$.\\

To construct a second curvature related gauge invariant  field that, together with (\ref{s}), 
allows us to recover  the Zerilli fields, 
 we need to 
consider  differential invariants. These will  (at least) one more derivative of the Zerilli fields. 
When simplifying their on shell form we find that first order derivatives 
do not cancel out (at least, in the many examples that we have worked out). \\
The field we chose is cosntructed as follows: 
define
\begin{equation} \label{invs3}
I =\tfrac{1}{720} (\nabla_{\a} C_{\b \gamma \d \tau})  (\nabla^{\a} C^{\b \gamma \d \tau}), \;\; J=(\nabla_{\a} F_{\b \d})
(\nabla^{\a} F^{\b \d}), 
\end{equation}
whose background values are 
\begin{equation}\label{back2}
\begin{split}
I_o  &= \frac{f(r) }{15 r^{10}} (15 M^2 r^2-36 M Q^2r +22 Q^4), \\
 J_o  &= - \frac{12 Q^2 f(r)}{r^6}, 
\end{split}
\end{equation}
then the gauge invariant field
\begin{equation} \label{T}
\T = I_o' \; \d J - J_o' \; \d I
\end{equation}
has an on shell expression with 
\begin{equation} \label{t>1}
\T^{(\ell>1)} = \Upsilon_1 \Phi_1' +  \Upsilon_2 \Phi_2' + \Omega_1 \Phi_1 + \Omega_2 \Phi_2.
\end{equation}
The operators $\Upsilon_n$ and $\Omega_n$ in (\ref{t>1}) do not admit a simple expression. In any case, all 
we need know about them is that,  
for $\Lambda>0$ and $r_h \leq r \leq r_c$ they are bounded, whereas 
for $\Lambda=0$ and $r \leq r_h$ they are bounded and, as $r \to \infty$, behave as
\begin{equation}
\begin{split}
\Upsilon_n& = 12M Q^2 r^{-14} \; \j2 (\j2 +2) \left[ (-1)^n 5M - \sqrt{9M^2-4Q^2(\j2 +2)}\right] + \mathcal{O}(r^{-15}),\\
\Omega_n &= 12M Q^2 r^{-15} \; \j2 (\j2 +2) \left[ (-1)^{n+1} 7M + \sqrt{9M^2-4Q^2(\j2 +2)}\right] + \mathcal{O}(r^{-16}).
\end{split}
\end{equation}
For $\T^{(\ell=1)}$ we find
\begin{equation}
\T^{(\ell=1)} = \frac{16 Q \, f(r)}{15 r^{18}} C(r) \p_r \varphi + \frac{Q \, f(r)}{15 r^{19} (2Q^2-3Mr)} D(r) \varphi,
\end{equation}
with
\begin{align}\nonumber
 C(r) = &45 \, M^2 \Lambda r^6 - 110 \, MQ^2 \Lambda r^5 + (68 Q^4\Lambda-180 M^2) r^4 + 9M(45M^2+46Q^2)r^3 \\
&- 3Q^2( 379 \, M^2 +80 \, Q^2)r^2 + 1014 \, MQ^4 r- 276 \, Q^6,   \label{c} \\ \nonumber
D(r) = &135\,{M}^{3}\Lambda\,{r}^{7}-387\,{M}^{2}{Q}^{2}\Lambda\,{r}^{6}-2\,M
 \left( -193\,{Q}^{4}\Lambda+270\,{M}^{2} \right) {r}^{5}+ \left( -140
\,{Q}^{6}\Lambda+1215\,{M}^{4}+1431\,{M}^{2}{Q}^{2} \right) {r}^{4}\\ 
& -81
\,M{Q}^{2} \left( 47\,{M}^{2}+16\,{Q}^{2} \right) {r}^{3}+3\,{Q}^{4}
 \left( 1477\,{M}^{2}+144\,{Q}^{2} \right) {r}^{2}-2310\,M{Q}^{6}r+444
\,{Q}^{8}.  
\end{align}
For the $\ell=0$ piece of $\T$:
\begin{equation}
\begin{split} \label{t=0}
\T^{(\ell=0)} &= \d M \left( {I}_o' \; \p_{M} J_o -J_o' \; \p_M {I}_o \right)  
+  \d Q \left( {I}_o' \; \p_{Q} J_o -J_o' \; \p_Q {I}_o \right)  \\
&= \frac{16 f(r)\; Q}{5 r^{18}} (Q \T_M \; \d M  - \T_Q \; \d Q).
\end{split}
\end{equation}
Here
\begin{equation}
\begin{split} \label{t=0b}
\T_Q = 15\,{M}^{2}\Lambda\,{r}^{5}-18\,M{Q}^{2}\Lambda\,{r}^{4}-60\,{M}^{2}{r
}^{3}+27\,M \left( 5\,{M}^{2}+2\,{Q}^{2} \right) {r}^{2} +22\,{Q}^{2}
 \left( Q^2-9M^2 \right)   r+42\,M{Q}^{4}
 \end{split}
 \end{equation}
 and
 \begin{equation}  \label{t=0c}
 \T_M =10\,M\Lambda\,{r}^{5}-12\,{Q}^{2}\Lambda\,{r}^{4}-45\,M{r}^{3}+
 \left( 90\,{M}^{2}+54\,{Q}^{2} \right) {r}^{2}-132\,M{Q}^{2}r+28\,{Q}^{4}.
\end{equation}

Equations (\ref{s>1})-(\ref{s=0})  and (\ref{t>1})-(\ref{t=0}) allows us to prove that $\s$ and $\T$ contain 
all the gauge invariant information of a given  
perturbation.

\begin{thm} \label{1} Consider the set of gauge classes of even solutions $[(h_{\a \b},\f_{\a \b})]$ of the 
LEME around a Reissner-Nordstr\"om  
(A)dS black hole background, and the perturbed fields $\s$ and $\T$ defined above. 
The map $[(h_{\a \b}, \mathcal{F}_{\a \b})] \to (\s,\T)$ is injective: it is possible to reconstruct 
a representative 
of $[h_{\a \b}]$ and $[\f_{\a \b}]$  from $(\s, \T )$. 
\end{thm}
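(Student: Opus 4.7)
The plan is to use the parametrization (\ref{L+}) of $\mathcal{L}_+$ and invert the map $[(h_{\alpha\beta},\mathcal{F}_{\alpha\beta})]\mapsto(\mathcal{S},\mathcal{T})$ mode by mode after projecting onto $\mathbf{J}^2$-eigenspaces. This decomposition is compatible with the construction because $\mathcal{S}$ and $\mathcal{T}$ are built from curvature scalars whose background values depend only on $r$, so the prescription commutes with $\mathbf{J}^2$, which in turn commutes with the LEME and preserves parity. Once $(\delta M,\delta Q,\varphi,\Phi_1,\Phi_2)$ are recovered from the projections of $(\mathcal{S},\mathcal{T})$, a representative of $[h_{\alpha\beta}]$ and $[\mathcal{F}_{\alpha\beta}]$ is produced by the explicit formulas of Section~\ref{osp}.

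For the $\ell=0$ sector, (\ref{s=0}) and (\ref{t=0}) together with (\ref{t=0b})--(\ref{t=0c}) constitute an inhomogeneous $2\times 2$ linear system for the constants $(\delta M,\delta Q)$. I would check by direct computation that the determinant $2Q\mathcal{T}_Q-3MQ\mathcal{T}_M$ is a non-trivial polynomial in $r$ (for instance at $\Lambda=0$ it reduces to $Qr(15M^2 r^2-54MQ^2 r+44Q^4)$, manifestly positive for large $r$), so the system is invertible at some $r\in(r_h,r_c)$ and $(\delta M,\delta Q)$ is uniquely fixed. For the $\ell=1$ sector, (\ref{s=1}) exhibits $\mathcal{S}^{(\ell=1)}$ as a product of $\varphi$ with the prefactor $384Q(3Mr-2Q^2)(Mr-Q^2)^2/r^{17}$; in a non-extremal \rn background $Mr-Q^2$ and $3Mr-2Q^2$ are positive for $r>r_h$, so the prefactor does not vanish on the outer static region and dividing yields $\varphi$ at once.

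The substantive case is $\ell\geq 2$, where two Zerilli functions per harmonic must be recovered from the two invariants. Equation (\ref{s>1}) supplies one algebraic relation between $\Phi_1$ and $\Phi_2$, which I would solve for $\Phi_1$ in terms of $\Phi_2$ and $\mathcal{S}^{(\ell>1)}$, noting that $r\beta_2-4Q^2$ does not vanish on $(r_h,r_c)$ for a non-extremal background (using the explicit form $\beta_2=3M+\sqrt{9M^2-4Q^2(\mathbf{J}^2+2)}$). Differentiating in $r$ and substituting into (\ref{t>1}) then yields a first-order linear inhomogeneous ODE in $r$ for $\Phi_2$ at each fixed $t$ and harmonic mode, with source linear in $\mathcal{S}$, $\partial_r\mathcal{S}$ and $\mathcal{T}$; the asymptotic expressions of $\Upsilon_n,\Omega_n$ ensure the leading coefficient is non-vanishing, so the ODE can be integrated and determines $\Phi_2$ up to a single homogeneous solution.

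The main obstacle is fixing that one-parameter ambiguity, since the gauge class should be recovered without residual freedom. The resolution is to invoke that $\Phi_2$ is on shell: it must satisfy the Zerilli equation (\ref{Z1}) with potential $V_2$ in Ricatti form (\ref{Ricatti}). Any ambiguity sits in the one-dimensional kernel of the first-order operator, an explicit function of $r$; applying the second-order Zerilli operator to this kernel element produces a nonvanishing source, which is incompatible with $\Phi_2$ being a genuine Zerilli field. Hence the integration constant is forced to zero, $\Phi_2$ is fixed and then so is $\Phi_1$ via the algebraic relation inherited from $\mathcal{S}^{(\ell>1)}$. Together with the already recovered $(\delta M,\delta Q,\varphi)$ this establishes injectivity of $[(h_{\alpha\beta},\mathcal{F}_{\alpha\beta})]\mapsto(\mathcal{S},\mathcal{T})$ and provides the reconstruction of gauge representatives.
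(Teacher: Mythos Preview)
Your proposal is essentially the same approach as the paper's. The paper argues injectivity by showing that $\mathcal{S}=0=\mathcal{T}$ forces $\delta M=\delta Q=0$, $\varphi=0$, and $\Phi_1=\Phi_2=0$ (the last by writing one Zerilli field in terms of the other via $\mathcal{S}^{(\ell>1)}=0$, substituting into $\mathcal{T}^{(\ell>1)}=0$, and arguing that the resulting equation has only the trivial solution compatible with the Zerilli equation~(\ref{Z1})), and then defers the explicit reconstruction to a companion reference; you instead carry out the inversion constructively mode by mode, but the logical skeleton and the key $\ell\geq 2$ step---reducing to a single equation for one $\Phi_n$ and invoking the on-shell constraint~(\ref{Z1}) to kill the residual freedom---are identical.
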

\begin{proof}
Assume $\s=0=\T$, then equations (\ref{s=0}) and (\ref{t=0})-(\ref{t=0c}) imply $\d M=0=\d Q$. 
Equation (\ref{s=1})  implies $\varphi=0$, and the combination of (\ref{s>1}) and(\ref{t>1}) gives
$\Phi_1=0=\Phi_2$. This last assertion follows from a reasoning on the line of the proof of Theorem 5 
in \cite{Dotti:2016cqy}: from $\s^{(\ell >1)}=0$ and (\ref{s>1}) we may write $\Phi_2$ in terms of $\Phi_1$ which, 
inserted 
 in the equation $\T^{>1}=0$ using  (\ref{t>1}), gives  an equation for $\Phi_1$ whose only solution compatible 
with (\ref{Z1}) is the trivial one. 
Thus, an electro-gravitational perturbation must be trivial if $\s = 0 = \T$.\\
To reconstruct the perturbation from $\s$ and $\T$ we proceed as in Theorem 1.i in \cite{julian}. 
\end{proof}

 The fact that the $\ell=0$ degrees of freedom are $\d Q$ and $\d M$ explains why 
these quantities can be obtained from $\s_0$ and $\T_0$ by inverting (\ref{s=0}) and (\ref{t=0}). 
In \cite{Ferrando:2002dq}, a characterization of subclasses of type-D spacetimes is made in terms of 
equations involving curvature tensors and scalars. In particular, two curvature scalars are given such that, 
when evaluated on a \rn spacetime, they give the mass and charge (see Theorem 5). Since these scalar fields 
are constant on  \rn backgrounds, their first order perturbation are gauge invariant For $\ell=0$ perturbations they agree exactly 
with $\d M$ and $\d Q$, because these are perturbations along the \rn 
family. The scalar fields in \cite{Ferrando:2002dq} are made out of rational functions of rational powers 
of the basic polynomial invariants (\ref{invs1}),  (\ref{invs2}) and (\ref{invs3}). To illustrate the relation 
between these (a priori) more general scalar gauge invariants and the  ones we constructed above 
we consider the case of scalar perturbations 
of an uncharged (Schwarzschild) black hole. In this case we get from (\ref{back1}) and (\ref{back2})
\begin{equation}
M^2 = \frac{9}{2} \frac{ {w_1}_o^4}{\Lambda {w_1}_o + \sqrt{6} \, {w_1}_o^{3/2} + 3 I_o}.
\end{equation}
Thus, for 
\begin{equation}
Z = \frac{9}{2} \frac{ {w_1}^4}{\Lambda {w_1} + \sqrt{6} \, {w_1}^{3/2} + 3 I},
\end{equation}
$\d Z$ is  gauge invariant, and so is 
\begin{equation}
-r^{-5} \d Z = (9M-4r+\lambda r^3) \frac{\d \w_1}{6} + 3 r^3 \d I = \frac{r^{10}}{12 \, M^2} \left( I_o' \d w_1 - {w_1}_o' \d I   \right), 
\end{equation}
which is of the form (\ref{K}) and 
agrees with the gauge invariant field $G_+$ used in the analysis of the even Schwarzschild perturbations in 
\cite{Dotti:2016cqy} (equation (202)).

\subsection{Pointwise boundedness of $\Phi_n$ and  $\p_{r^*}\Phi_n$} \label{pb}

In this section we restrict our attention to black hole solutions with horizons $0<r_i<r_h<r_c$. The relation between the
radii of the horizons and $\Lambda, M$ and $Q$ can be easily obtained from (\ref{f}) and (\ref{ff}). 
For $\Lambda>0$ [$\Lambda=0$] we are interested in the range $r_h<r<r_c$ [$r>r_h$]. In 
both cases the tortoise radial coordinate satisfies  $-\infty < r^* < \infty$. \\
\begin{figure}[htb]
    \includegraphics[width=.3\textwidth]{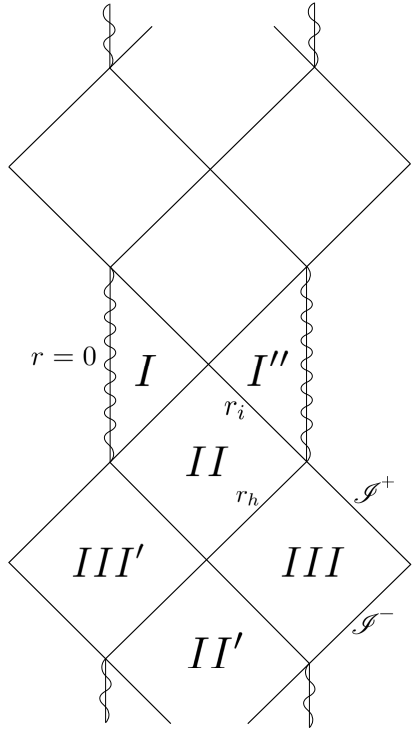}
  \caption{\label{rnfig} The Carter-Penrose diagram of (part of) the maximal analytic extension of the $|Q|<M$ 
\rn  black hole. 
The union of II, II', III and III'   is globally hyperbolic, its boundary at $r_i$ is a Cauchy horizon.}
\end{figure}

\begin{figure}[htb]
    \includegraphics[width=.6\textwidth]{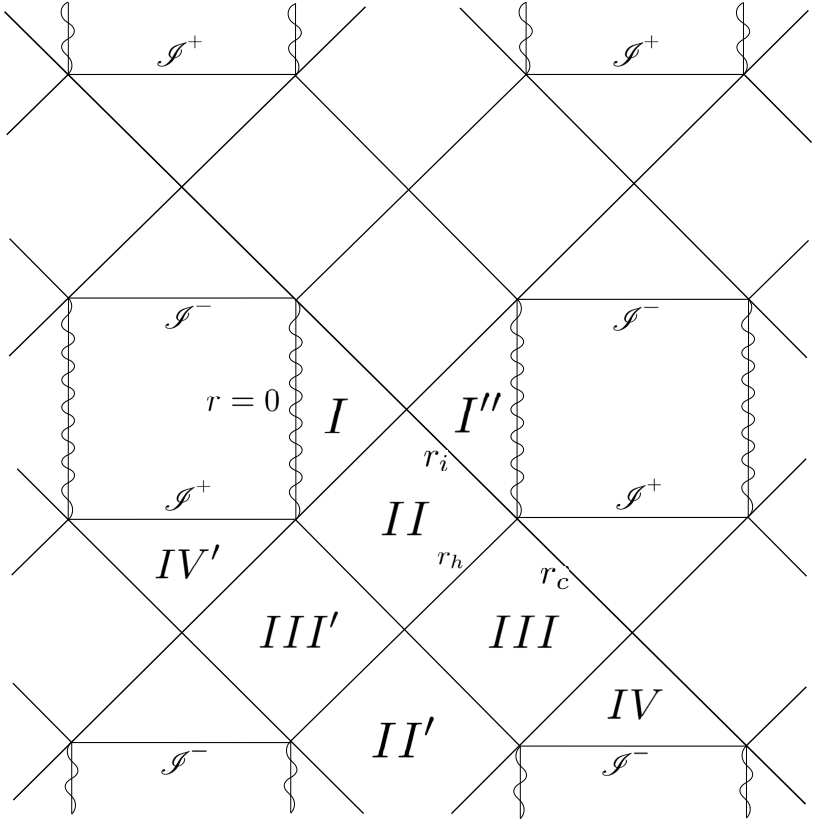}
  \caption{\label{rndsfig} The Carter-Penrose diagram of (part of) the maximal analytic extension of a non extremal 
(three different horizons) 
\rn  de Sitter black hole. }
\end{figure}

\begin{thm}\label{bound}  \label{2} 
Assume $\Phi_n$ is a smooth solution of equation (\ref{ZE2}) on the union of regions II, II', III and III' 
of the  extended \rn (figure \ref{rnfig}) or \rn   de Sitter (figure \ref{rndsfig}) spacetimes, 
 with compact support on Cauchy surfaces. There exist constants $C_o, L_o$ that depend on the 
datum of this field at a Cauchy surface, such that $|\Phi_n|<C_o$ and $|\p_{r^*} \Phi_n| < L_o$ 
for all points in the outer static region $III$. 
\end{thm}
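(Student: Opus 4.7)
The plan is to treat the Zerilli equation (\ref{ZE2}) as a 1+1 wave equation $\p_t^2 \Phi_n = \p_{r^*}^2 \Phi_n - V_n \Phi_n$ on the $(t,r^*)$ cylinder, with $r^* \in (-\infty,\infty)$ covering the outer region (with $r^* \to -\infty$ at $r_h$ and $r^* \to +\infty$ at either $r_c$ or spatial infinity). The argument splits into three pieces: (i) positivity/boundedness of $V_n$, (ii) a conserved $T$-energy, and (iii) extraction of pointwise bounds from an $H^1$-in-$r^*$ estimate via a 1D Sobolev inequality, plus a commutation trick for the derivative bound.

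For step (i) I would start from the Ricatti form (\ref{Ricatti})--(\ref{fn}) and show that $V_n \geq 0$ throughout $r_h<r<r_c$ (or $r_h<r$ when $\L=0$). In the regime $9M^2-4Q^2(\j2+2)>0$ the quantities $\beta_n, f_n$ are manifestly real and positive on the outer region, so that each of the three Ricatti summands is non-negative after dealing with the sign of $\p_r f_n$. In the opposite regime (high $\ell$) $\beta_n$ becomes complex; I would then rewrite the system (\ref{mop}) in a \emph{real} diagonal basis (real/imaginary parts of the complex combinations $\Phi_n$) and check positivity there. The key fact needed downstream is that $V_n$ is smooth, bounded on the closed interval $[r_h,r_c]$ (resp.\ $[r_h,\infty)$), vanishes at the horizons and stays strictly positive on any compact subinterval of the interior.

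Step (ii) is the standard conserved energy associated with the static Killing field $\p/\p t$:
\begin{equation}
E[\Phi_n](t) = \int_{-\infty}^{\infty} \left[ (\p_t \Phi_n)^2 + (\p_{r^*} \Phi_n)^2 + V_n \, \Phi_n^2 \right] dr^*.
\end{equation}
Multiplying (\ref{ZE2}) by $\p_t \Phi_n$ and integrating in $r^*$, the boundary terms vanish because $\Phi_n$ has compact support on each constant-$t$ slice (finite propagation speed from the compactly supported Cauchy datum on $\Sigma \subset \text{II} \cup \text{II}' \cup \text{III} \cup \text{III}'$), so $\tfrac{d}{dt}E = 0$. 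By the positivity of $V_n$, $E(t)=E(0)$ directly controls the $L^2_{r^*}$ norms of $\p_{r^*}\Phi_n$ and $\p_t \Phi_n$ uniformly in $t$; the value $E(0)$ is finite and bounded by the $H^1$ norm of the initial datum.

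For step (iii), to get the pointwise bound on $\Phi_n$ I would use the 1D interpolation $|u(r^*)|^2 \leq 2\|u\|_{L^2(I)}\|u'\|_{L^2(I)}$ applied on any interval $I$ containing the observation point whose closure lies in the interior, using that $V_{\min}(I)\int_I \Phi_n^2 \leq \int V_n \Phi_n^2 \leq E(0)$ to control $\|\Phi_n\|_{L^2(I)}$. To bound $\p_{r^*}\Phi_n$ pointwise I would commute the equation with $\p_t$: since $\p_t \Phi_n$ solves the same equation, its conserved energy gives uniform bounds on $\|\p_t \Phi_n\|_{L^2_{r^*}}$ and $\|\p_t^2 \Phi_n\|_{L^2_{r^*}}$, and then the equation itself rewrites $\p_{r^*}^2 \Phi_n = V_n \Phi_n + \p_t^2 \Phi_n$, giving $\|\p_{r^*}^2 \Phi_n\|_{L^2_{r^*}}$ bounded in terms of initial data. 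The same 1D Sobolev inequality applied to $u = \p_{r^*}\Phi_n$ then yields $|\p_{r^*}\Phi_n(t,r^*)| \leq L_o$. The main obstacle I anticipate is step (i): checking $V_n \geq 0$ in the high-$\ell$ regime where the Ricatti decomposition (\ref{Ricatti}) is formally complex, and checking uniformity of the constants $C_o, L_o$ as the observation point approaches the horizons where $V_n$ degenerates and the local $L^2$ control from $\int V_n \Phi_n^2$ breaks down, requiring propagation-from-data arguments tuned to the horizon geometry.
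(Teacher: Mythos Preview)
Your broad strategy---positivity, conserved energy, Sobolev embedding---is the same as the paper's, but the two obstacles you flag at the end are exactly where your outline has genuine gaps, and the paper resolves them by different means than you suggest.

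First, the equation is not truly $1{+}1$: $\Phi_n$ lives on $\mathbb{R}_{r^*}\times S^2$ and $V_n$ contains the operator $\mathbf{J}^2$. Your one-dimensional Sobolev inequality in $r^*$ bounds each harmonic component $\Phi_n^{(\ell,m)}$ with an $\ell$-dependent constant, and summing over $(\ell,m)$ is not automatic. The paper instead uses a Sobolev inequality on $\mathbb{R}\times S^2$ that involves both $\lVert\partial_{r^*}^2\Phi_n\rVert$ and $\lVert\mathbf{J}^2\Phi_n\rVert$, and then observes that $\mathbf{J}^2\Phi_n$, $\beta_n\Phi_n$, $\mathbf{J}^2(\mathbf{J}^2{+}2)\Phi_n$, $A_n\Phi_n$, etc., are themselves solutions of the Zerilli equation whose energies are finite (smoothness on $S^2$ forces $\Phi_n^{(\ell,m)}$ to decay faster than any power of $\ell$) and computable from the datum.

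Second, your control of $\lVert\Phi_n\rVert_{L^2(I)}$ via $V_{\min}(I)\int_I\Phi_n^2\leq E$ fails uniformly as the observation point approaches either horizon, where $V_n\to 0$. The paper does \emph{not} patch this with propagation arguments near the horizon; instead it uses that the full operator $A_n=-\partial_{r^*}^2+V_n$ is strictly positive (this, and the pointwise bound $V_n\geq 0$, are quoted from the Kodama--Ishibashi $S$-deformation argument rather than extracted from the Ricatti form), defines $A_n^{-1/2}$ via the spectral theorem, and bounds the zeroth-order term globally by $\lVert\Phi_n|_t\rVert^2\leq 2\,E(A_n^{-1/2}\Phi_n^o,\,A_n^{-1/2}\dot\Phi_n^o)$. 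This is the missing idea in your sketch.

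Finally, your claim that compact support on constant-$t$ slices of region III follows from compact Cauchy data on $\text{II}\cup\text{II}'\cup\text{III}\cup\text{III}'$ is not immediate when the data straddle the bifurcation sphere; the paper invokes the Kay--Wald reduction to data vanishing there before running the energy argument.
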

\begin{proof} 
As in the proof of Theorem 2 in \cite{julian},
 and following \cite{Kay:1987ax}, we may restrict, without loss of generality, to fields 
that vanish on the bifurcation sphere together with their Kruskal time derivatives (for details, see \cite{Kay:1987ax}).\\
On a  $t$ slice of region III, define the $L^2$ norm of a real field $G$ as 
\begin{equation} \label{norm}
\lVert G \rVert^2=\langle G|G\rangle=\int_{\mathbb{R} \times S^2}  G^2 \;  dr^* \sin(\theta) d\theta d\phi, \;\; dr^* = \frac{dr}{f}.
\end{equation}
Using a Sobolev type inequality (equation (5.27)  in  \cite{dk}) on the Zerilli fields $\Phi_n$ at a fixed time $t$ gives:
\begin{equation} \label{sobo}
|\Phi_n(t,r^*,\theta,\phi)| \leq  C \left(\left. \lVert \Phi_n \right|_t \rVert + \left. \lVert \p_{r^*}^2 \Phi_n \right|_t \rVert +
\left. \lVert \j2 \Phi_n \right|_t \rVert \right). 
\end{equation}
where $C$ is a constant.  
We will follow the  strategy in \cite{Kay:1987ax} of proving that the $L^2$ norms 
on the right hand side of (\ref{sobo}) can be bounded by the energies of related field configurations. Since 
energy is conserved for solutions of (\ref{ZE2}), we get in this way a $t-$independent upper bound of the 
right side of (\ref{sobo}) and therefore,  a global bound of $|\Phi_n(t,r^*,\theta,\phi)|$ for all $(t,r^*,\theta,\phi)$, 
i.e., of $\Phi_n$ in the outer static region III. \\

The inner product defined by the norm (\ref{norm}), simplifies, after 
introducing an expansion in real orthonormal spherical harmonics (e.g., tesseral spherical harmonics) $S_{(\ell,m)}$. If  
$G = \sum_{(\ell,m)} g_{(\ell,m)} S_{(\ell,m)}$ and  $K = \sum_{(\ell,m)} k_{(\ell,m)} S_{(\ell,m)}$ then 
\begin{equation}
\langle G|K\rangle= \sum_{(\ell,m)} \int_{\mathbb{R}}   g_{(\ell,m)} k_{(\ell,m)} \;  dr^*. 
\end{equation}

From (\ref{AE}) we  get:
\begin{equation} \label{AE2}
\lVert \p_{r^*}^2 \fn \rVert \leq \lVert A_n \fn \rVert + \lVert V_n\fn \rVert
\end{equation}
where $V_n$, given in (\ref{Z1}), can be written as
\begin{align} \label{decomp}
V_n=\frac{{}_n{Z_1}}{D_n}\,\beta_n+\frac{{}_n{Z_2}}{D_n^2}\,\beta_n^2
+\frac{{}_n{Z_3}}{D_n}\,\j2(\j2+2)
\end{align}
being
\begin{equation}
 {}_n{Z_1}= f (f/r^2)' = -\frac{2f}{r^4} \left(r-3M+\frac{2Q^2}{r}\right),\,\,\,
 {}_n{Z_2}=\frac{2f^2}{r^4},\,\,\,\,
 {}_n{Z_3}=\frac{f}{r^2},\;\;
D_n =\beta_n/r-\left(\mathbf{J}^2+2\right)
\end{equation}
Note that: i)  The ${}_n{Z_j}, j=1,2,3$,  depend only on $r$ and are bounded in the domain of interest 
$r_h \leq r \leq r_c$ if $\Lambda>0$ [$r>r_h$ if $\Lambda=0$] by constants  $ {}_n{z_j} > |{}_n{Z_j}(r)| $ 
that depend on $M,Q$ and $\Lambda$; ii) $\Phi_n^{(1)} := \beta_n \Phi_n$, $\Phi_n^{(2)} := \beta_n^2 \Phi_n$ 
and $\Phi_n^{(3)} := \j2(\j2+2) \Phi_n$ are solutions of the Zerilli equation (\ref{Z1}) if $\Phi_n$ is 
a solution; also $\Phi_n^{(4)} :=A_n \Phi_n = -\p_t^2 \Phi_n$ is a solution. This is so because 
any operator that is a function of $\j2$ and $\p_t$, commutes with the operator (\ref{Z1}); iii) 
On the $\ell$ eigenspace of $\j2$, $\ell=2,3,4,...$, $D_n$ acts multiplicatively as
\begin{equation}
D_n^{\ell}(r) =\beta_n/r+(\ell-1)(\ell+2) 
\end{equation}
For $r_h \leq r \leq r_c$ [$r>r_h$ if $\Lambda=0$] and $\ell \geq 2$, $|D_2^{\ell}(r)|$ has 
an absolute minimum $D_2^*>0$ at $r=r_h$ and $\ell=2$, whereas $|D_1^{\ell}(r)|$ also has 
a nonzero absolute minimum $D_1^*$ (possibly at an $\ell>2$). 
Then the first term of $V_n \Phi_n$ 
(see (\ref{decomp})) can be bounded as follows
\begin{align}
\left \lVert \frac{{}_n{Z_1}}{D_n} \b_n  \left. {\Phi}_n \right|_t \right \rVert^2&=\int_{\mathbb{R} \times S^2} \left[ \sum_{\ell m} 
 \frac{{}_n{Z_1}}{D_n}  \left. ({\Phi}_n^{(1)})^{ (\ell,m)} \right|_t S_{(\ell,m)} \right]^2 \;  dr^* \sin(\theta) d\theta d\phi \\
&\leq \left(\frac{{}_n{z_1}}{D_n^*} \right)^2\int_{\mathbb{R}}   \sum_{\ell m}  \left[
  \left. ({\Phi}_n^{(1)})^{ (\ell,m)} \right|_t  \right]^2 \;  dr^* \\
&= \left(\frac{{}_n{z_1}}{D_n^*} \right)^2 \lVert \left. {\Phi}_n^{(1)} \right|_t \rVert^2
\end{align}
Proceeding similarly with the other terms in (\ref{AE2})-(\ref{decomp})  and using the triangle inequality 
gives
\begin{equation} \label{AE3}
\lVert \p_{r^*}^2 \fn \rVert \leq  \left(\frac{{}_n{z_1}}{D_n^*} \right) \lVert  {\Phi}_n^{(1)} |_t \rVert
+  \left(\frac{{}_n{z_2}}{D_n^*} \right) \lVert  {\Phi}_n^{(2)} |_t \rVert  +
 \left(\frac{{}_n{z_3}}{D_n^*} \right) \lVert  {\Phi}_n^{(3)} |_t \rVert + 
 \lVert  {\Phi}_n^{(4)} |_t \rVert
\end{equation}
Inserting this in (\ref{sobo2}) gives 
\begin{equation} \label{sobo3}
|\Phi_n(t,r^*,\theta,\phi)| \leq  K' \left(\left. \lVert \Phi_n \right|_t \rVert + 
 \lVert  {\Phi}_n^{(1)} |_t \rVert +  \lVert  {\Phi}_n^{(2)} |_t \rVert+ \lVert  {\Phi}_n^{(3)} |_t \rVert
+ \lVert  {\Phi}_n^{(4)} |_t \rVert+ \lVert  {\Phi}_n^{(5)} |_t \rVert \right)
\end{equation}
where $K'$ is the maximum over $j$ and $n$ of the constants  ${}_n{z_j} K/D_n^*$, and 
$\Phi_n^{(5)} = \j2 \Phi_n$. \\

 The conserved (i.e., $t-$independent) energy associated to equation (\ref{ZE2}) is 
\begin{equation} \label{energy}
E = \frac{1}{2} \int _{\mathbb{R} \times S^2} ((\p_t \Phi_n)^2 + \Phi_n A_n \Phi_n )\;  dr^* \sin(\theta) d\theta d\phi.
\end{equation}
Since $E$ does not depend on $t$, we may regard it as a functional on the initial datum: $E = E(\Phi^o_n, \dot \Phi^o_n)$, where $\Phi^o_n
 = \left.  \Phi_n \right|_{t_o}$ and $ \dot \Phi^o_n=\left. (\p_t \Phi_n) \right|_{t_o}$:
\begin{equation} \label{energyid}
E(\Phi^o_n, \dot \Phi^o_n)= \frac{1}{2} \int _{\mathbb{R} \times S^2} ((\dot \Phi^o_n)^2+ \Phi^o_n A_n \Phi^o_n
 )\;  dr^* \sin(\theta) d\theta d\phi.
\end{equation}
Using the facts that: i) $A_n, n=1,2,$ is positive definite in the cases we are interested in 
(proved by means of an S-deformation in \cite{Kodama:2003kk}) and so $A_n^{\pm 1/2}$ can be defined 
by means of the spectral theorem (as well as any other power of $A_n$); ii) for a solution $\Phi_n$ of  (\ref{ZE2}), 
$A_n^p \Phi_n$ is a solution of  (\ref{ZE2}) and iii) equation (\ref{energyid}), follows that for 
a solution of  (\ref{ZE2})
\begin{equation}\label{cota1}
 \lVert \left.  \Phi_n \right|_t \rVert^2 \leq 2 \, E(A_n^{-1/2} \Phi_n^o, A_n^{-1/2} \dot \Phi_n^o).
 \end{equation}
We now use the fact that applying to a Cauchy datum  $(\Phi_n^o, \dot \Phi_n^o)$ an operator that is a function of $\j2$ or 
$A_n$ commutes with time evolution.  This allows us   to estimate 
each term on the right hand side of (\ref{sobo3})  with the energy of field configurations related to the one with initial datum 
$(\Phi_n^o, \dot \Phi_n^o)$. Let $B^{(j)}\Phi_n:=\Phi_n^{(j)}, j=1,...,5$ (that is, 
for $j=1,...,5$ these operator are respectively $\beta_n\,,\beta_n^2\,, \j2(\j2+2), A_n$ and $\j2$). From (\ref{cota1})
\begin{equation} \label{energy}
 \lVert  \Phi_n^{(j)}|_t  \rVert^2 \leq  2 \; E\left(A_n^{-\frac{1}{2}} B^{(j)} \Phi_n^o,A_n^{-\frac{1}{2}} B^{(j)} \dot \Phi_n^o\right),\\
\end{equation}
Thus, we can replace the right hand side of (\ref{sobo3}) by  time independent constant $C_o$ made out of  
the initial data $(\Phi_n^o, \dot \Phi_n^o)$ (from which the energies of the related fields $B^{(j)} \Phi_n$ can be computed)
\begin{equation}
|\Phi_n| < C_o  
\end{equation}
It is interesting to note why the fields $B^{(j)} \Phi_n$ have finite energy (a fact tacitly used above): we are assuming smooth 
solutions of the LEME, therefore the 
$\Phi_n$ are $C^{\infty}$ on the sphere, and the series $\sum_{\ell m} [\ell (\ell+1)]^k \Phi_n^{(\ell,m)} 
= (-\j2)^k \Phi_n$
converge for any $k$. In particular, the $\Phi_n^{(\ell,m)}$ decay faster than any power of $\ell$. \\

We will also need a $t-$independent bound for $|\p_{r^*} \Phi_n|$. This can be obtained following 
the same ideas in \cite{Dotti:2016cqy}, taken from \cite{Dain:2012qw}. Starting from the Sobolev inequality (cf equation 
(\ref{sobo2})) applied to $|\p_r^* \Phi_n|$
\begin{equation} \label{sobo2}
|\p_{r^*} \Phi_n(t,r^*,\theta,\phi)| \leq  L \left(\left. \lVert \p_{r^*} \Phi_n \right|_t \rVert +
 \left. \lVert \p_{r^*}^3 \Phi_n \right|_t \rVert +
\left. \lVert \j2 \p_{r^*} \Phi_n \right|_t \rVert \right). 
\end{equation}
Now, using the fact that, for $\Lambda \geq 0$ 
and four dimensions,  the $V_n$ are nonnegative in the interval of interest \cite{Kodama:2003kk}
\begin{equation}
 \lVert \p_{r^*}  \Phi_n |_t \rVert^2 \leq \langle \Phi_n | A_n  \Phi_n\rangle \leq 2 E(\Phi_n^o,\dot \Phi_n^o)
 \end{equation}
This places a $t-$ independent bound on the first term on the right of equation (\ref{sobo2}). The third term can 
be similarly bounded with $E(\j2 \Phi_n^o,\j2 \dot \Phi_n^o)$. For the second term we use 
\begin{equation}
\p_{r^*}^3 \Phi_n = - \p_{r^*} (A_n \Phi_n) + \p_{r^*}V_n \Phi_n + V_n \p_{r^*}\Phi_n,
\end{equation}
$\lVert  \p_{r^*} (A_n \Phi_n) \rVert^2 \leq 2 E(A_n \Phi_n^o,\dot A_n \Phi_n^o)$
 and the boundedness of the operator $ \p_{r^*}V_n$. \\
Proceeding as above, we arrive at the desired pointwise bound:
\begin{equation}
|\p_{r^*} \Phi_n| < L_o
\end{equation}
\end{proof}

\subsection{Pointwise boundedness of $\s$ and $\T$} \label{pb2}

We can now proceed to complete the proof of  nonmodal stability  by showing that the scalar fields 
$\s$ and $\T$ are pointwise bounded in the region of interest by constants that depend on 
the initial conditions.

\begin{thm}\label{3}
Under the assumptions of the Theorem \ref{2}, 
in the outer static region III of a $\Lambda \geq 0$ \rn black hole there holds 
\begin{equation} \label{invbounds}
\s < \frac{A_o}{r^{14}}, \;\; \T < \frac{B_o}{r^{14}},
\end{equation}
where $A_o$ and $B_o$ are constants that depend on the Cauchy datum $(\Phi^o_n, \dot \Phi^o_n)$ of the perturbation.
\end{thm}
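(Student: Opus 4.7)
The plan is to decompose $\mathcal{S}$ and $\mathcal{T}$ by angular momentum sector using the spherical harmonic expansion, bound each piece using the explicit formulas assembled in Section \ref{invs}, and close each estimate by invoking Theorem \ref{2} together with an analogous scalar bound at $\ell=1$. First, the $\ell=0$ contribution is elementary: equations (\ref{s=0}) and (\ref{t=0})--(\ref{t=0c}) express $\mathcal{S}^{(\ell=0)}$ and $\mathcal{T}^{(\ell=0)}$ as the constants $\delta M$ and $\delta Q$ (fixed once and for all by the Cauchy data) multiplied by smooth rational functions of $r$ whose numerators are polynomials and which, in the region $r_h\le r$ (or the compact range $r_h\le r\le r_c$ when $\Lambda>0$), decay at infinity at least as fast as $r^{-14}$; both pieces are therefore pointwise bounded by $\mathrm{const}/r^{14}$.

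For the $\ell=1$ sector I would repeat the argument of Theorem \ref{2} applied to the scalar wave equation (\ref{LEME1a}) for the gauge invariant field $\varphi$. The potential $V^{(\ell=1)}$ in (\ref{V11}) is nonnegative on the outer static region (either checked directly from the Ricatti form, or via an $S$-deformation as in \cite{Kodama:2003kk}), so the conserved $1+1$ energy is positive definite, and the Sobolev-plus-energy argument of Theorem \ref{2} gives pointwise bounds $|\varphi|,\,|\partial_{r^*}\varphi|<\mathrm{const}$ determined by the initial datum of $\varphi$. Combined with equation (\ref{s=1}) and with the explicit formula for $\mathcal{T}^{(\ell=1)}$ --- whose prefactors behave like $r^{-17}$ and $r^{-19}$ modulo bounded rational factors --- this yields the required $r^{-14}$ bound in the $\ell=1$ sector.

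For the $\ell>1$ sector, formulas (\ref{s>1}) and (\ref{t>1}) display $\mathcal{S}^{(\ell>1)}$ and $\mathcal{T}^{(\ell>1)}$ as $r$-dependent linear combinations of $\Phi_n$, $\partial_{r^*}\Phi_n$ and of polynomial operators in $\j2$ and $\beta_n=3M+(-1)^n\sqrt{9M^2-4Q^2(\j2+2)}$ acting on them. Since every such polynomial operator $B$ commutes with the Zerilli operator (\ref{Z1}), each $B\Phi_n$ is itself a solution of (\ref{Z1}), and Theorem \ref{2} applies verbatim to give $|B\Phi_n|,\,|\partial_{r^*}(B\Phi_n)|<\mathrm{const}$, with the constants controlled by the energies $E(B\Phi_n^o,B\dot\Phi_n^o)$ of the transported data. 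Inserting these bounds into (\ref{s>1}) and (\ref{t>1}), and reading off the explicit $r$-dependence --- in particular the leading $r^{-14}$ behaviour of $\Upsilon_n,\Omega_n$ at large $r$ given in the excerpt --- yields $\mathcal{S}^{(\ell>1)}<A_o'/r^{14}$ and $\mathcal{T}^{(\ell>1)}<B_o'/r^{14}$. Summing the three sectors gives the claimed estimates (\ref{invbounds}).

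The main obstacle is that the coefficients in (\ref{s>1}) and (\ref{t>1}) involve the unbounded operators $\j2(\j2+2)$ and $\beta_n$, so a mode-by-mode estimate followed by a sum over $\ell$ would diverge. The resolution, already exploited in Theorem \ref{2}, is to apply the Sobolev inequality (\ref{sobo})--(\ref{sobo2}) only \emph{after} absorbing every such operator into the transported field $B\Phi_n$, whose energy is finite precisely because the Cauchy datum is smooth and of compact support, making its harmonic coefficients decay faster than any polynomial in $\ell$. Assembling the finitely many transported-data energies associated with the operators $B$ that actually appear in (\ref{s>1}) and (\ref{t>1}) (together with the corresponding energies for $\varphi$ and the $\ell=0$ contribution from $\delta M,\delta Q$) produces the constants $A_o$ and $B_o$.
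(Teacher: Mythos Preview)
Your strategy is essentially the same as the paper's: split into $\ell=0,1,\geq 2$ sectors, use the explicit formulas (\ref{s>1})--(\ref{t=0c}), absorb the unbounded angular operators $\j2(\j2+2)$ and $\beta_n$ into transported solutions $B\Phi_n$ of the Zerilli equation, and then invoke Theorem \ref{2}. The paper does exactly this for $\ell\geq 2$ and handles $\ell=0$ just as you do.

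The one difference is at $\ell=1$. You propose to rerun the full Sobolev-plus-energy machinery of Theorem \ref{2} on equation (\ref{LEME1a}), asserting that $V^{(\ell=1)}$ is nonnegative (``checked directly from the Ricatti form, or via an $S$-deformation''). The paper instead exploits that the $\ell=1$ sector contains only three harmonic components $\varphi^{(m)}$, $m=-1,0,1$, so no Sobolev estimate on $S^2$ is needed: it simply invokes equation (19) of the Erratum in \cite{wald} for each component to bound $|\varphi|$, and for $|f\partial_r\varphi|$ refers to the argument in \cite{Dain:2012qw}. Your route would work too, but note that the Ricatti representation (\ref{Ricatti}) is written only for the $\ell\geq 2$ potentials $V_n$, and the $S$-deformation in \cite{Kodama:2003kk} does not directly address $V^{(\ell=1)}$; so if you keep your version you should either verify nonnegativity of $V^{(\ell=1)}$ explicitly from (\ref{V11}) or, more simply, follow the paper and cite Wald's one-dimensional bound mode by mode.
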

\begin{proof}
Let us consider the first inequality. Using the facts that $\j2 (\j2+2) \Phi_n$ and 
$\j2 (\j2+2) \beta _k \Phi_n$ are solutions of the Zerilli equation (\ref{Z1}) with an 
energy that is a function of  $(\Phi^o_n, \dot \Phi^o_n)$, Theorem \ref{2} and equation 
(\ref{s>1}), we find  for $\Lambda=0$ that  $|\s^{(\ell >1)}| < \frac{A_o^{\ell >1}}{r^{14}}$ with 
$A_o^{\ell >1}$ a constant that depends on the $\ell >1$ piece of the initial datum
(the inequality holds trivially for $\Lambda >0$ and $r_h < r < r_c$). For the $\ell=1$ piece 
we use equation (19) in the Erratum in \cite{wald}, applied to the harmonic components of $\varphi$. This gives 
$|\varphi| $ less than a constant that depends on the $\ell=1$ piece of the datum. Then, from 
(\ref{s=1}), follows  $|\s^{(\ell =1)}| < \frac{A_o^{(\ell=1)}}{r^{14}}$ with 
$A_o^{(\ell=1)}$ a constant that depends on the $\ell =1$ piece of the initial datum
(once again, the equality holds trivially for $\Lambda >0$ and $r_h < r < r_c$). Finally, 
from equation (\ref{s=0}) follows trivially that $|\s^{(\ell =0)}| < \frac{A_o^{(\ell=0)}}{r^{14}}$
where $A_o^{(\ell=0)}$ is a constant made related to the $\ell=0$ initial data $(\d M, \d Q)$. \\
To prove the second inequality in (\ref{invbounds}) we proceed exactly as above. We  only need 
a proof of the pointwise boundedness for $f(r)\p_r \varphi$, for which we proceed as in 
\cite{Dain:2012qw} (see the paragraph starting at equation (83)) .
\end{proof}

\section{Discussion}

We have shown in Theorem \ref{1} that the gauge invariant curvature related perturbation fields $\s$ and $\T$,  
defined in equations (\ref{s}) and (\ref{T}), contain all the gauge invariant information of an even  perturbation 
class $[(h_{\a \b}, \mathcal{F}_{\mu \nu})]$ around a Reisner-Nordstr\"om (dS) black hole. From these fields, 
a representative $(h_{\a \b}, \mathcal{F}_{\mu \nu})$ of the perturbation in, say, the Regge-Wheeler gauge, can be 
reconstructed (Theorem 1). For smooth perturbations with compact support on a Cauchy surface of (a copy of) 
the union of regions II, II', IIII and III' 
(see figures \ref{rnfig} and \ref{rndsfig}), these fields are pointwise bounded on the outer region 
(equation  (\ref{invbounds}) in 
Theorem \ref{3}). These results, together with those in \cite{julian}, complete the proof of nonmodal 
linear stability of the outer region of a (dS) Reissner-Nordstr\"om black hole. \\

The large $|t|$ decay of the Zerilli fields (see \cite{Price:1971fb}, \cite{Brady:1996za} 
and  the recent decay results by E. Georgi  in \cite{Giorgi:2019kjt} 
 and references therein) 
and  the similarly expected behavior of $\varphi$, 
together with equations (\ref{s>1})-(\ref{s=0}) and (\ref{t>1})-(\ref{t=0}) give
\begin{equation} \label{Flt}
\mathcal{S} \simeq    \frac{192 \, Q \; (Q^2-Mr)^2}{r^{16}} (3M \; \d Q- 2 Q \; \d M) ,
\end{equation}
and 
\begin{equation} \label{Qlt}
\mathcal{T} \simeq     \frac{16 f(r)\; Q}{5 r^{18}} (Q \T_M \; \d M  - \T_Q \; \d Q) 
\end{equation}
as $t \to \infty$, within a bounded range of $r$ (that growths towards the future) in region III 
(the quantities $\T_M$ and $\T_Q$   were defined in  equations (\ref{t=0})-(\ref{t=0c})). 
The inequalities 
 (\ref{invbounds}), instead,  hold on the entire region III.\\
Together with equations (122) and (123) in \cite{julian}, 
equations (\ref{Flt}) and (\ref{Qlt}) indicate that,  
 for large $t$, the perturbed black hole settles into a Kerr-Newman black hole with parameters 
$M + \d M$, $Q+ \d Q$ and $\vec{J} + \d \vec{J}$. \\

The importance of the result in Theorem \ref{2} lies in the possibility 
of analyzing stability and instability effects in terms of the fields $\s$, $\T$ (and $\mathcal{Q}$ and $\mathcal{F}$ 
in \cite{julian}). The divergence of $\frac{d}{d \tau}\mathcal{S}$ and $\frac{d}{d \tau} \mathcal{T}$ for observers crossing the Cauchy 
horizon $r_i$ can be proved in the same way the divergence of $\frac{d}{d \tau}\mathcal{Q}$ and $\frac{d}{d \tau} 
\mathcal{F}$ was proved for the odd sector scalars in Section IV in \cite{julian}. 
Using these four fields, statements such as the Cauchy horizon instability or the event horizon 
transverse derivative instabilities 
 \cite{penrose}-\cite{Lucietti:2012xr}
 acquire a clear geometrical meaning. \\

\section{Acknowledgements}
This work was partially funded by grants PIP 11220080102479
(Conicet-Argentina) and 30720110101569CB (Universidad Nacional de C\'ordoba). 
J.M.F.T. is supported by a fellowship from Conicet. Many calculations were performed making  
use of the grtensor package 
\cite{grtensor}. We thank an anonymous referee for pointing out the need of a topological term in  equation (\ref{dF}), bringing our 
attention to reference \cite{Ferrando:2002dq} and spotting a number of misprints in earlier versions of the manuscript.

\end{document}